\newcommand{\ndash}{\nobreakdash-\hspace{0pt}}
\newcommand{\Ndash}{\nobreakdash--}
\newcommand{\ii}{{\mathrm{i}}}
\newcommand{\dd}{{\mathrm{d}}}
\newcommand{\EE}{\mathrm{e}}
\newcommand{\Div}{{\mathrm{div}}}
\newcommand{\frX}{\mathfrak{X}}
\newtheorem{Thm}{Theorem}[section]
\newtheorem*{Thm*}{Theorem}
\newtheorem*{Lem*}{Lemma}
\theoremstyle{remark}
\newtheorem*{Ack}{Acknowledgment}
\theoremstyle{definition}
\newtheorem{Rem}[Thm]{Remark}%[section]
\newtheorem{Dig}[Thm]{Digression}
\newtheorem{Def}[Thm]{Definition}%[section]
\newtheorem{Not}[Thm]{Notation}
\newcommand{\bbR}{{\mathbb{R}}}
\newcommand{\de}{\partial}
\newcommand{\bD}{{\boldsymbol{D}}}
\newcommand{\calV}{\mathcal{V}}
\newcommand{\frg}{{\mathfrak{g}}}
\def\gpd{\,\lower1pt\hbox{$\longrightarrow$}\hskip-.24in\raise2pt
               \hbox{$\longrightarrow$}\,}
\let\Tilde=\widetilde
\let\Hat=\widehat
\newcommand\qq{}
\newcommand\cmp[1]{{\qq Commun.\ Math.\ Phys.\ \bf #1}}
\newcommand\pl[1]{{\qq Phys.\ Lett.\ \bf #1}}
\newcommand\np[1]{{\qq Nucl.\ Phys.\ \bf #1}}
\newcommand\phr[1]{{\qq Phys.\ Rev.\ \bf #1}}
\newcommand\ijmp[1]{{\qq Int.\ J. Mod.\ Phys.\ \bf #1}}
\newcommand\adm[1]{{\qq Adv.\ Math.\ \bf #1}}
\newcommand\atmp[1]{{\qq Adv.\ Theor.\ Math.\ Phys.\ \bf #1}} 
\newcommand\ahp[1]{{\qq Ann.\ Henri Poincar\'e \bf #1}}
\begin{document}
\title{Phase Space for Gravity with Boundaries}

\begin{abstract}
%This space intentionally left blank.
%Please provide a 50-100 word synopsis of the article, which will be used to summarise the work when presented online.
This explanatory note, based on the geometrical method by Kijovski and Tulczyjew, describes the construction of the reduced phase space of Lagrangian field theories, i.e., the correct space of initial conditions with its symplectic structure. Several examples and, in particular, the case of four-dimensional gravity in the coframe formalism (Palatini--Cartan theory) are analyzed.
\end{abstract}

\author{Alberto S. Cattaneo}
\address{Institut f\"ur Mathematik, Universit\"at Z\"urich\\
Winterthurerstrasse 190, CH-8057 Z\"urich, Switzerland}  
\email{cattaneo@math.uzh.ch}

%\epigraph{\emph{Das Bekannte \"uberhaupt ist darum, weil es bekannt ist, nicht erkannt}}{GWFH}

\thanks{I acknowledge partial support of the SNF Grant No.\ 200020\_192080 and of the Simons Collaboration on Global Categorical Symmetries.
%, and of the COST Action 21109 %Cartan geometry, Lie, Integrable Systems, quantum group Theories for Applications 
%(CaLISTA). 
This research was (partly) supported by the NCCR SwissMAP, funded by the Swiss National Science Foundation. This article is based upon work from COST Action 21109 CaLISTA, supported by COST (European Cooperation in
Science and Technology) \href{www.cost.eu}{(www.cost.eu)}. This article is commissioned by the Encyclopedia of Mathematical Physics, edited by M. Bojowald and R. J. Szabo, to be published by Elsevier \cite{ency}.}

\keywords{%Please provide 5-10 keywords, which will be used for indexing purposes only. These should be listed alphabetically.
Gravity. Kijovski--Tulczyjew method. Lagrangian field theory. Reduced phase space. Symplectic reduction.}

%\keywords{%Hamilton--Jacobi, Batalin--Vilkovisky, Segal--Bargmann, Chern--Simons, Wess--Zumino--Witten--Novikov,
%%Calabi--Yau, Kodaira--Spencer
%Hamilton--Jacobi, (generalized) generating functions,
%%Chern--Simons theory,
%Chern--Simons, Wess--Zumino--Witten,
%%, Batalin--Vilkovisky formalism, Segal--Bargmann transform, Wess--Zumino--Witten model
% nonlinear (Hitchin) phase space polarization,  Kodaira--Spencer (BCOV) action,
% Batalin--Vilkovisky, Batalin--Fradkin--Vilkovisky}

%\subjclass[2020]{81T70, % "Quantization in field theory; cohomological methods" (MSC2020)
%53D22, % "Canonical transformations in symplectic and contact geometry" (MSC2020)
%70H20, % "HamiltonÐJacobi equations in mechanics" (MSC2020)
%53D55, % "Deformation quantization, star products" (MSC2020)
%53D50 (Primary); % "Geometric quantization" (MSC2020)
%81T13, %: "YangÐMills and other gauge theories in quantum field theory" (MSC2020)
%81S10, %: "Geometry and quantization, symplectic methods" (MSC2020)
%70H15, %: "Canonical and symplectic transformations for problems in Hamiltonian and Lagrangian mechanics" (MSC2020)
%57R56, %: "Topological quantum field theories (aspects of differential topology)" (MSC2020)
%81T45 (Secondary). % "Topological field theories in quantum mechanics" (MSC2020)
%}

%\dedicatory{In memory of Kirill Mackenzie}

\maketitle

Key points % Please provide a short, bulleted list of key point or objectives for the article
\begin{itemize}
\item Definition of Lagrangian field theories
\item Euler--Lagrange equations and Noether $1$-form
\item Boundary structure
\item Evolution relations
\item Cauchy data and the reduced phase space
\item Some examples: scalar field, electromagnetism, Palatini--Cartan theory.
\end{itemize}

\tableofcontents

%\allowdisplaybreaks

\section{Introduction}
One useful description of a mechanical system is in terms of hamiltonian evolution. This is given by a symplectic manifold and a hamiltonian function---the evolution being given by the flow of its hamiltonian vector field. This description is particulary interesting because it is, in principle, ready-made for quantization, with the symplectic manifold being turned to a Hilbert space and the hamiltonian function to a hamiltonian operator (how this is done, if anyway possible, is another story).

Very often a fundamental system is however introduced via an action functional and its associated Euler--Lagrange (EL) equations. Recasting it as a hamiltonian system---in particular in the case of gravity---is the goal of this note.

The (reduced) phase space of the system refers to the symplectic manifold on which one formulates
the hamiltonian evolution of the system, whenever possible, once one chooses a time axis. Roughly speaking, it is the space of Cauchy data for the EL equations. The main issue is that the latter may have also a nonevolutionary content, which then gives rise to constraints. This is typically the case of gauge theories and of gravity.

The traditional way of proceding is due to Dirac \cite{Dirac50}. This is an algebraic method with the goal of extending the Legendre transform to the case of degenerate Lagrangians. The method is iterative, with the construction of primary and a sequence of secondary constraints, and can in principle break down at every step. %Even in the successful cases, it may be very laborious (as, e.g., in the case of gravity in the Palatini--Cartan formalism).

In this note we will focus on a more geometric approach due to Kijovski and Tulczyjew (KT) \cite{KT} (we also rely on the presentation in \cite{CMR11}).\footnote{A related construction, which we will not discuss here, goes under the name of ``covariant phase space'' (see \cite{WZ99,BB01,FOPS21,MBV21,MBV22}, and also
 \cite{nLabps} for an overview).  In this approach, the space of solutions, equipped with its symplectic structure, is usually constructed via the bivariational complex (see, e.g., \cite{BHL09} and references therein).}
The central idea is to reformulate the space of solutions to the EL equations as an isotropic relation between the spaces of initial and final field configurations %, corresponding to some chosen initial and final time, which are 
naturally endowed with a closed $2$\ndash form. The flow property is here replaced by the set\ndash theoretic composition of relations, which only in particular cases are graphs of a flow. The initial (or final) space is then naturally endowed with a subset of Cauchy data---the points that can belong to the relation for an arbitrary small interval of time. As the restriction\footnote{\label{f-restriction}Following the common terminology in symplectic geometry, by ``restriction'' we mean the ``pullback by the inclusion map.''} of the $2$\ndash form to the space of Cauchy data is usually degenerate, one has to consider an appropriate quotient (by the null directions of the $2$\ndash form) %, which gets rid of the kernel of the restriction of the $2$\ndash form, 
to recover the reduced phase space as a symplectic manifold. It may of course happen that the quotient is singular. However, there is a natural way of doing this reduction in stages. % and to check what goes wrong and where. 
In most physical theories, the only problematic quotient turns out to be the last one: the symplectic reduction of a coisotropic (i.e., defined by first\ndash class constraints) submanifold of a symplectic manifold of (appropriately defined) boundary fields. 
To deal with the last reduction, there are appropriate techniques (e.g., the Batalin--Fradkin--Vilkovisky formalism (BFV) \cite{FV75,FK77}, % \cite{BFV}, 
which will not be discussed in this note). 

One advantage of the KT method is that it is rather flexible. One does not really have to consider evolution between an initial and a final time, but one can focus on the initial time only. One can even drop the condition that the initial submanifold should be space-like---the formalism working in more general situations---and one does not have to consider cylinders (a product of a manifold by an interval), but one can consider more general manifolds with boundary. One further advantage is %It has to be remarked that an added bonus of this method is 
that this method can be readily applied to the extension of the BV formalism \cite{BV77,BV81} to manifolds with boundary (and possibly lower\ndash dimensional strata), producing the BV-BFV formalism discussed in \cite{CMR14}.

We will start with a warm up dealing with simples cases. Next we will formulate the general theory and will apply it to the already presented examples. Finally, we will discuss gravity %, both in the Eintein--Hilbert and 
in the coframe (a.k.a.\ Palatini--Cartan) formulation.

\begin{Ack}
I thank J. Huerta, M. Schiavina, S. Speziale, and M. Tecchiolli
for useful comments on a first draft.
\end{Ack}

\section{Some preliminary examples}\label{s:prel-exa}
We start reviewing some simple examples that will motivate the general theory. We will review in Section~\ref{s-exaafterKT} how they are described within the general KT method.

\subsection{Mechanics}\label{s:mechanics}
Let us start with the familiar example of mechanics in one dimension. The action functional is
\[
S[q]=\int_a^b \left(\frac12m\dot q^2-V(q)\right)dt,
\]
where $q$ (the ``field'') is a path $[a,b]\to\bbR$, $m$ is the mass, and $V$ is the potential energy. We want to compute a variation of $S$ isolating in the integral variations of $q$ but not of its derivative(s). We can do this integrating by parts:
\[
\delta S = -\int_a^b \left(m\ddot q+V'(q)\right)\delta q\,dt
+m\dot q\,\delta q\Big |_a^b.
\]
The bulk term---i.e., the integral over the interval---contains the EL equations as the coefficient of $\delta q$:
\[
m\ddot q+V'(q)=0.
\]
The boundary term can be reinterpreted as the difference, at times $b$ and $a$,
of a $1$\ndash form, on the space of positions $q$ and velocities $v=\dot q$, that we denote by $\alpha$ and call the \textsf{Noether $1$\ndash form}:\footnote{More precisely, we regard this $1$\ndash form as defined on the tangent bundle  $T\bbR=\bbR\times\bbR\ni(v,q)$ of the space of fields at $a$ (or $b$), %i.e., $\bbR$, 
so $\delta$ becomes the usual de~Rham differential. % on the configuration space $\bbR$. 
The boundary term in the variation of the action is more explicitly written as
\[
m\dot q\,\delta q\Big |_a^b = \pi_b^*\alpha - \pi_a^*\alpha,
\]
where $\pi_a$ and $\pi_b$ are the maps from the space of paths to $\bbR\times\bbR$ that send the path $q$ to $(\dot q(a),q(a))$ and to $(\dot q(b),q(b))$, respectively.}
\[
\alpha = mv\,\delta q.
\]
%Here $\delta$ is just the de~Rham differential on the configuration space $\bbR$.
We then take its differential
%\footnote{Recall that the de~Rham differential, which we denote here by $\delta$ instead of $\d$ as usual, is simply defined by}
\[
\omega \coloneqq\delta\alpha = m\,\delta v\,\delta q,
\]
which is an example of a \textsf{symplectic form}. This simply means that $\omega$ is closed (i.e., $\delta\omega=0$) and nondegenerate (i.e., hamiltonian vector fields are uniquely determined). 
%Recall that a Hamiltonian vector field
We call the space $T\bbR$ of positions and velocities together with $\omega$ the \textsf{phase space} of the theory.\footnote{Note that the theory produces two copies of the phase space, one for the initial time $a$ and the other for the final time $b$.} 
(Usually, one calls phase space the space $T^*\bbR$ of positions $q$ and momenta $p=mv$ with $\omega=\delta p\,\delta q$, but %, from the point of view of these notes, 
this is just a change of variables.) As the EL equations are of second order, the phase space $T\bbR$ is the space of possible initial conditions, a.k.a.\ Cauchy data. Assuming the existence of global solutions, this may also be regarded as the space of solutions to the EL equations.

The dynamics is then given by the Hamiltonian $H=\frac12mv^2+V(q)$ on $T\bbR$ (or, as usual, $H=\frac{p^2}{2m}+V(q)$ on $T^*\bbR$) in the sense
that the time evolution is governed by the first-order ODE $(\dot v,\dot q)=X$ where $X=(-\frac{V'}m,v)$ is the unique vector field satisfying
\[
\iota_X\omega+\dd H = 0
\]
with $\iota$ denoting contraction. The vector field $X$ is called the hamiltonian vector field of $H$. 
The graph $L_{[a,b]}$ of the flow of $X$ can be shown to be a lagrangian submanifold of $T\bbR\times T\bbR$, upon changing the sign of the symplectic form on one of the two factors. This is an example of what we will in general call the evolution relation of the system.

\begin{comment}
The above construction can easily be generalized in higher dimension, say on $\bbR^n$, with
\[
S[q]=\int_a^b \left(\frac12m \sum\nolimits_i (\dot q^i)^2-V(q)\right)dt.
\]
The Noether $1$\ndash form is then
\[
\alpha = m\sum\nolimits_i v^i\,\delta q^i,
\]
and the symplectic form is
\[
\omega \coloneqq\delta\alpha = m\sum\nolimits_i \delta v^i\,\delta q^i
\]
In this case the phase space is the space $T\bbR^n$ of positions and velocities, and the Hamiltonian reads
$H=\frac12m\sum_i(v^i)^2+V(q)$.
\end{comment}

\subsection{A degenerate example}\label{s:degenerate}
Consider now, as action functional, the euclidean length of a curve:
\[
S[q] = \int_a^b \|\dot q\|\;dt = \int_a^b\sqrt{\dot q\cdot\dot q}\;dt,
\]
where $q\colon[a,b]\to\bbR^n$ is a path. The first, minor, remark concerning this example is that, to avoid singularities, we have to restrict the space of paths to the regular ones, i.e., those with nonvanishing speed everywhere (this is an open condition, not a constraint). If we compute the variation of the action, we get
\[
\delta S = -\int_a^b\frac\dd{\dd t}\frac{\dot q}{\|\dot q\|}\,\delta q\; dt+
\frac{\dot q\cdot\delta q}{\|\dot q\|}\Bigg |_a^b.
\]
The EL equations are then
\[
\frac\dd{\dd t}\frac{\dot q}{\|\dot q\|}=0,
\]
which are solved by straight regular paths, with an unspecified parametrization. The Noether $1$\ndash form is
\[
\alpha = u\cdot \delta q,
\]
where we have introduced the notation $u$ for the normalized velocity $\frac{\dot q}{\|\dot q\|}$. The corresponding $2$\ndash form
\[
\omega = \delta u \cdot \delta q
\]
is degenerate. To look for its kernel, consider a vector field 
\[
X=X_q\cdot\frac\de{\de q}+X_u\cdot\frac\de{\de u}
\]
and compute the contraction
\[
\iota_X\omega = X_u\cdot\delta q - X_q\cdot\delta u.
\]
The first term may vanish only for $X_u=0$ because the components $\delta q^i$ of $\delta q$ are independent. On the other hand, since $u$ is normalized, we have $u\cdot u = 1$ which, differentiated, yields the relation $u\cdot\delta u=0$.
Since $\delta u$ is orthogonal to $u$, the last term in $\iota_X\omega$ vanishes exactly when $X_q$ is parallel to $u$.

To formulate evolution in hamiltonian terms, we have to mod out the kernel of the $2$\ndash form $\omega$. %\marginpar{formulate reduced phase space}
This reduction then consists in modding $q$ out along the $u$ direction. 
The reduced space is the  union over 
$u\in S^{n-1}$, the $(n-1)$\ndash dimensional sphere, of the spaces
\[
F_u = \{q\in\bbR^n\}/[q\sim q+\lambda u,\ \lambda\in\bbR].
\]
As a representative for each equivalence class, one can choose a vector orthogonal to $u$. Therefore, we may identify
$F_u$ with 
\[
T_u \coloneqq \{q\in\bbR^n:q\cdot u = 0\}.
\]
But this is just the tangent space of the sphere (viewed as submanifold of $\bbR^n$) at the point $u$.
We then have that the reduced phase space is $\cup_{u\in S^{n-1}}T_u=TS^{n-1}$, the tangent bundle of the sphere. The symplectic structure is $[\delta u]\cdot \delta q$. 
Using the round metric, one can actually identify $TS^{n-1}$ with $T^*S^{n-1}$. Under this transformation, the symplectic form is mapped to the canonical one, $\delta p\cdot\delta q$.

Finally note that the symplectic manifold $TS^{n-1}$ constructed above can also be regarded as the space of solutions to the EL equations, i.e., the space of oriented straight lines in $\bbR^n$: an oriented line is in fact specified by its direction (the unit tangent vector $u\in S^{n-1}$) and by its position in space
(a point $q\in\bbR^n$ modulo translations along $u$). 
Also note that these data %that direction and position in space 
do not change along the line, so the evolution on the reduced phase space $TS^{n-1}$ is given by the identity map, and the Hamiltonian vanishes.

\subsection{The scalar field}\label{s:scalar}
Let us now consider one example from field theory in $d$ dimensions. To a $d$\ndash dimensional Lorentzian manifold $(M,g)$ we associate the space of fields $C^\infty(M)$ and the action functional
\[
S_M[\phi]=\frac12\int_M g^{\mu\nu}\,\de_\mu\phi\,\de_\nu\phi\,\sqrt{|\det g|}\;\dd^dx,
\]
where $(g^{\mu\nu})$ are the components of the inverse of $g$.
The EL equation is the wave equation on $(M,g)$:
%The EL equations
 \[
 \de_\mu(g^{\mu\nu}\de_\nu\phi\sqrt{|\det g|})=0.
 \]

%=
%\frac12\int_M\dd\phi\wedge *\dd\phi,
%\]
%where in the last expression we have used the de~Rham differential and the Hodge $*$ operator. 
We consider $M=\Sigma\times I$, where
$\Sigma$ is space-like and $I=[a,b]$ is a time interval. We also assume for simplicity that near the boundary the metric splits as
\[
g=-(\dd x^0)^2+h_{ij}\dd x^i\dd x^j,
\]
where $0$ denotes the time direction and $(h_{ij})$ is a euclidean %time-independent 
metric. 
%(note that we can allow %$g_{00}$ and the entries $h_{ij}$ to depend on all space--time coordinates).
When we compute the variation of $S_M$, we get a boundary term---the Noether $1$\ndash form---on each of the two boundary copies of $\Sigma$:
\[
\alpha^\de_\Sigma = \int_\Sigma \phi_0\,\delta\phi\,\sqrt{\det h}\;\dd^{d-1}x,
\]
where $\phi_0$ denotes %$\frac1{\sqrt{|g_{00}|}}\de_0\phi$.
$\de_0\phi$.
%the derivative of $\phi$ in the time direction.

In the symplectic setting, we  consider $\phi$ and $\phi_0$ as two independent functions on $\Sigma$: the initial configuration and the initial velocity. 
The space $F^\de_\Sigma$, with coordinates $\phi$ and $\phi_0$, can be identified with the space of solutions to the wave equation on
$\Sigma\times [a,b]$ with $\phi$ and $\phi_0$ as initial conditions at time $a$.
The space  $F^\de_\Sigma$ with symplectic form
\begin{equation}\label{e:symplscalar}
\omega^\de_\Sigma\coloneqq
\delta\alpha^\de_\Sigma = \int_\Sigma \delta\phi_0\,\delta\phi\,\sqrt{\det h}\;\dd^{d-1}x
\end{equation}
is then the phase space of the theory.

\begin{comment}
Near the boundary the EL equation reads, assuming for simplicity that $h$ is time\ndash independent, 
\[
(\de_0)^2\phi\sqrt{\det h}=\de_i(h^{ij}\de_j\phi\sqrt{\det h}).
\]
Using the definition $\phi_0=\de_0\phi$ and write the time derivative with a dot, we get the system
\[
\dot\phi=\phi_0,\qquad \dot\phi_0 = \frac1{\sqrt{\det h}}\,\de_i(h^{ij}\de_j\phi\sqrt{\det h});
\]
i.e., the hamiltonian equations for
\[
H=\frac12\int_\Sigma(\phi_0^2+h^{ij}\,\de_i\phi\,\de_j\phi)\,\sqrt{\det h}\;\dd^{d-1}x.
\]
\end{comment}

\subsection{Electromagnetism}\label{s:EM}
Electromagnetism on a $d$\ndash dimensional lorentzian manifold $(M,g)$ is described by a $1$\ndash form\footnote{More generally, a $U(1)$ connection.}
$A=A_\mu\dd x^\mu$ on $M$ (called the potential) % or the connection)
 and the action
\[
S_M[A]=\frac14\int_M g^{\mu\nu} g^{\rho\sigma}\,F_{\mu\rho}\,F_{\nu\sigma}
\,\sqrt{|\det g|}\;\dd^dx,
\]
where $F_{\mu\nu}=\de_\mu A_\nu-\de_\nu A_\mu$ is the electromagnetic field. The EL equations are in this case the Maxwell equations
\[
\de_\mu(g^{\mu\nu} g^{\rho\sigma}\,F_{\nu\sigma}\,\sqrt{|\det g|})=0 \quad\forall\rho.
\]

Again we consider $M=\Sigma\times I$, where
$\Sigma$ is space-like and $I=[a,b]$ is a time interval with split metric near the boundary.
In this case, the boundary contribution---the Noether $1$\ndash form---on each copy of $\Sigma$ is
\[
\alpha^\de_\Sigma = \int_\Sigma h^{ij}\,F_{0i}\,\delta A_j\,\sqrt{\det h}\;\dd^{d-1}x,
\]
where $h_{ij}$ is the Euclidean metric on $\Sigma$ obtained by restricting $g$, and $0$ denotes the time component. Again, in the symplectic formalism, we consider $F_{0i}$ and $A_j$ as independent fields. We denote by 
$F^\de_\Sigma$
%$\Tilde F_\Sigma$ 
the space with coordinates $F_{0i}$ and $A_j$ and with symplectic structure
\[
\omega^\de_\Sigma=
\delta\alpha^\de_\Sigma = \int_\Sigma h^{ij}\,\delta F_{0i}\,\delta A_j\,\sqrt{\det h}\;\dd^{d-1}x.
\]
\begin{Rem}\label{r:elF}
If we introduce the electric field $E^j\coloneqq h^{ij}F_{0i}$, then we get the usual formula 
\[
\omega^\de_\Sigma=\int_\Sigma \delta E^j\,\delta A_j \sqrt{\det h}\;\dd^{d-1}x,
\]
which simply says that the electric field and the vector potential are canonically conjugate variables.
%In the usual $(3+1)$\ndash dimensional case, the $A_i$s are the component of the vector potential $\mathbf{A}$ and the
%$F_{0i}$ are the component of the electric field $\mathbf{E}$. With these notations we have
%\[
%\omega^\de_\Sigma=\int_\Sigma \delta\mathbf{A}\cdot\delta\mathbf{E}\;d\mathrm{vol}.
%\]
\end{Rem}

The crucial point now is that, unlike in the case of the scalar field, the space $F^\de_\Sigma$  is not the space of solutions. In fact, near the boundary,
the EL equations read
\[
\de_0(g^{\rho j}F_{0j}\sqrt{\det h}) + \de_i(h^{ij} g^{\rho\sigma}F_{j\sigma}\sqrt{\det h})=0 \quad\forall\rho.
\]
For $\rho=k$ a space index, we get the evolution equation
\[
h^{kj}\dot F_{0j}\sqrt{\det h}= \de_i(h^{ij} h^{kl}F_{jl}\sqrt{\det h}),
\]
where we denoted the time derivative by a dot (assuming for simplicity that $h$ is time\ndash independent). The equation for $\rho=0$,
\[
\de_i(h^{ij} F_{0j}\sqrt{\det h})=0,
\]
is however not an evolution equation and instead defines a constraint on the variables on $F^\de_\Sigma$. Note that, with the notations in Remark~\ref{r:elF}, this equation corresponds to the Gauss law $\Div E = 0$. We then define $C$ as the set of solutions to the Gauss law or, equivalently, as the common zero set for all scalars $\lambda$ of the functions
\[
J_\lambda\coloneqq -\int_\Sigma \lambda\, \de_i(h^{ij} F_{0j}\sqrt{\det h})\;\dd^{d-1}x = \int_\Sigma \de_i\lambda\, h^{ij} F_{0j}\sqrt{\det h}\;\dd^{d-1}x
\]
The restriction of $\omega^\de_\Sigma$ to $C$ is degenerate. Its kernel is generated by the hamiltonian vector fields $X_\lambda$ for all $J_\lambda$s. We immediately get
\[
X_\lambda(A_i)=\de_i\lambda,\qquad X_\lambda(F_{0j})=0,
\]
which can be interpreted as infinitesimal gauge transformations. The reduced phase space $\underline C$ is then defined as the quotient of the space $C$ of solutions to the Gauss law by gauge transformations.

\subsection{Conclusions}
All the above examples have something in common: the evolution can eventually be described in hamiltonian terms, and the symplectic form is ultimately derived from the Noether $1$\ndash form that arises as boundary term of a variation of the action. 

On the other hand, there are also several differences. In the case of mechanics or of the scalar field this is the end of the story.
In the degenerate case of the length functional, the $2$\ndash form arising from the the differential of the Noether $1$\ndash form is degenerate, so one has to reduce by its kernel in order to get a symplectic form. In the case of electromagnetism (and more generally of Yang--Mills theory), the construction we have presented actually yields a symplectic form on an appropriate space of boundary fields. However, some of the EL equations (the Gauss law) put constraints on them. Imposing the constraints leads however to a degeneracy of the restriction of the symplectic form, so one has now to reduce by its kernel.

In order to put more uniformity to these examples, but also to get ready to deal with more intricated examples, like gravity, we need a more conceptual perspective, which will be developed in the next sections.

%\marginpar{CONTINUE: EM; subsections; conclusions}

\section{Lagrangian field theory}
In this and in the next two sections, we discuss the general method. We encourage the reader to apply each new construction to the examples of Section~\ref{s:prel-exa} (the results will be briefly reviewed in Section~\ref{s-exaafterKT}).

A Lagrangian field theory on a manifold $M$, possibly with boundary, is specified by a space of fields $F_M$ and a local action functional $S_M$. The fields are local on $M$ (sections of some sheaf); typically: %\marginpar{Examples will follow. We suggest the reader to read the examples in parallel with the development of the theory in these sections.}
\begin{itemize}
\item maps from $M$ to some fixed target manifold,
\item sections of some vector (or, more generally, fiber) bundle over $M$,
\item connections on $M$ for some principal bundle.
\end{itemize}
The action functional is of the form $S_M=\int_M L$, where the lagrangian density $L$ is a function of the fields and some of their derivatives (jets) at a point. 
In this note, we assume $M$ to be compact.\footnote{If $M$ is not compact, the fields are assumed to vanish fast enough to ensure convergence of the integral. Alternatively, one considers the Lagrangian as a distribution to be applied to test functions with compact support.}
%If $M$ is not compact, the fields are assumed to vanish fast enough to ensure convergence of the integral. 
To start with, we fix no boundary conditions for the fields.

\begin{Rem}
The lagrangian density $L$ may require some additional structure on $M$. %For example, to define Yang--Mills theory we need $M$ to be endowed with a (pseudo)riemannian metric. 
For example, to define a theory for spinors we need $M$ to allow for spin bundles. To define gravity we need $M$ to allow for lorentzian metrics. For a given theory, we call a manifold that admits the required structures a \textsf{space--time manifold}.
\end{Rem}

The EL equations are encoded in the variation $\delta S_M$ of the action. We think of $\delta$ as the de~Rham differential on $F_M$. For this, we have to assume that $F_M$ has a suitable structure of infinite-dimensional manifold that allows defining differential forms and the de~Rham differential. 

\begin{Rem}
In this note, we prefer to consider smooth fields and give $F_M$ the structure of a Fr\'echet manifold. One may also prescribe a different regularity on the fields and give $F_M$ the structure of a Banach manifold. 
\end{Rem}

\begin{Dig}[A categorical approach]
A different option, which is actually closer to the physical viewpoint but which we will not explore in this note, consists in regarding $F_M$ as an internal hom space in the category of smooth manifolds. Let us explain this in the case of a sigma model where $F_M$ is the space of maps from $M$ to a fixed target manifold $N$ (possibly $\bbR$, as in scalar field theory). We then regard $F_M$ as the functor that to a finite\ndash dimensional manifold $Z$ associates the set of smooth maps from $Z\times M$ to $N$. We think of this as a way of exploring $F_M$ by manifolds $Z$ (it is the notion of Grothendiek's functor of points, but it is also the usual textbook way of parametrizing the fields when computing  functional derivatives). The natural context is that of  ``presheaves"  over the category $\mathbf{Mfld}$ of finite\ndash dimensional smooth manifolds, namely, of functors from $\mathbf{Mfld}^\text{op}$ to the category $\mathbf{Set}$ of sets. A finite\ndash dimensional manifold $X$ can also be regarded as a presheaf $\underline X$, namely, the functor that to a finite\ndash dimensional manifold $Z$ associates the set of smooth maps from $Z$ to $X$. We regard presheaves, e.g., $F_M$ or $\underline X$, as generalized manifolds. A natural transformation among the corresponding functors is regarded as a map between the generalized manifolds. Particularly useful is the generalized manifold $\Omega^k$, namely, the functor that to a finite\ndash dimensional manifold $Z$ associates the set $\Omega^k(Z)$ of smooth $k$\ndash forms on $Z$. A $k$\ndash form on $X$ is now equivalently described as a map from $\underline X$ to $\Omega^k$. One then defines $k$\ndash forms on a generalized manifold, e.g., $F_M$, as maps to $\Omega^k$.
For every $k$, we have the universal de~Rham differential $\delta$, the natural transformation from $\Omega^k$ to $\Omega^{k+1}$, which, when instantiated on a given $Z$, is the usual de~Rham differential $\dd\colon \Omega^k(Z)\to\Omega^{k+1}(Z)$. The action functional $S_M$ is now a map from $F_M$ to $\Omega^0$, namely, the natural tranformation, which, when instantiated on a given $Z$, is the smooth function on $Z$ given by $\int_M L$ (recall that the fields are now parametrized by $Z$). The variational derivative $\delta S_M$ is now just the composition of the universal de~Rham differential $\delta$ with the map $S_M$, a generalized map from $F_M$ to $\Omega^1$, so a $1$\ndash form on $F_M$.
\end{Dig}

%\begin{Rem}
%Yet another option, which we will not
% bivariational complex
%\end{Rem}

\subsection{Euler--Lagrange spaces}
If $M$ has no boundary, solutions to the EL equations are, by definition, the critical points of $S_M$, viz., the fields at which $\delta S_M=0$. If we introduce the Euler--Lagrange $1$\ndash form $el_M\coloneqq\delta S_M$, then the solutions to the EL equations are the zeros of $el_M$. We denote by $EL_M$ the space of solutions.

Note that there is an ambiguity in writing $el_M$, as there was already in writing $S_M$, because of total derivatives, whose integral is zero. If $M$ has nonempty boundary, the different ways are no longer equivalent but differ by boundary terms. 

We take the definition of $S_M$ as fixed (we will return to this later on), but in $el_M$ we only want variations of the fields, and not of their derivatives, to appear. This requires integrating by parts (several times if necessary). Finally, we may write (in a unique way)
\begin{equation}\label{e:deltaSonFM}
\delta S_M = el_M + \alpha_M,
\end{equation}
where $\alpha_M$, also a $1$\ndash form on $F_M$, collects the boundary terms. It follows that 
\begin{equation}
\label{e:alphaM}
\alpha_M=\int_{\de M} a_M,
\end{equation}
 where $a_M$ is a density on the boundary $\de M$ depending on the fields and their jets (including  transversal ones) at a point on $\de M$.

We keep denoting by $EL_M$ the zero locus of $el_M$. In this note, for simplicity, we will assume that $EL_M$ is a submanifold of $F_M$. Note that \eqref{e:deltaSonFM} has the following consequences:
\begin{enumerate}%[$(i)$]
\item The restriction of $\alpha_M$ to $EL_M$ is an exact $1$\ndash form.
\item If we change the definition of $S_M$ to $\Tilde S_M$ via the addition of total derivatives to the Lagrangian $L_M$, we have $\Tilde S_M = S_M + \phi_M$, where $\phi_M=\int_{\de M} f_M$ collects the boundary terms. Since our convention for $el_M$ is fixed (namely, we write $\delta \Tilde S_M = el_M + \Tilde\alpha_M$), we get
\[
\Tilde\alpha_M=\alpha_M+\delta\phi_M.
\]
\end{enumerate}
Changing the action as in (ii) should be regarded as an equivalence (anyway, the space $EL_M$ is does not change).\footnote{In the quantum theory, what matters is the Gibbs weight $\EE^{\frac\ii\hbar S_M}$, which, in view of this remark, should be considered not as a function but as a section of a line bundle over $F_M$: the change in the definition of $S_M$ is the action of the gauge transformation  $\EE^{\frac\ii\hbar \phi_M}$. {}From this point of view, $\alpha_M$ should be considered as a $1$\ndash form connection.} We then define the $2$\ndash form
\[
\omega_M\coloneqq \delta\alpha_M
\]
which has invariant meaning ($\delta\alpha_M=\delta\Tilde \alpha_M$). Point (i) above then becomes:\footnote{Explicitly, this means that $\omega_M$ evaluated on a solution and contracted with a solution of the linearized equations vanishes. See footnote~\ref{f-restriction}.}
\begin{quote}
The restriction of $\omega_M$ to $EL_M$ is zero.
\end{quote}
Borrowing the terminology of symplectic geometry, we say that $EL_M$ is an isotropic submanifold of $(F_M,\omega_M)$.

\subsection{Reduction}\label{s:reduction}%{Boundary fields}%{The first reduction}
The $2$\ndash form $\omega_M$ is closed by construction but in general is degenerate, so it is not a symplectic form. We can remedy for this quotienting $F_M$ by the kernel of $\omega_M$, i.e., by the distribution of vector fields $X$ on $F_M$ satisfying $\iota_X\omega=0$.\footnote{Here $\iota$ denotes contraction. Namely, $\iota_X\omega$ is the $1$\ndash form that satisfies $\iota_X\omega(Y)=\omega(X,Y)$ for every vector field $Y$.} 

We will assume that the quotient $F^\de_{\de M}$ has a smooth manifold structure such that the canonical projection $\pi\colon F_M\to F^\de_{\de M}$ is a surjective submersion.\footnote{We first have to assume that this distribution is regular, i.e., that the kernel distribution consists of sections of a vector bundle. Since $\omega$ is closed, this distribution is automatically involutive (viz., $\iota_X\omega=0$ and $\iota_Y\omega=0$ imply $\iota_{[X,Y]}\omega=0$), but, in the Fr\'echet context, the Frobenius theorem does not hold, so we have to assume explicitly that the distribution is integrable. Finally, we have to assume that the leaf space $F^\de_{\de M}$  has smooth structure as in the text. All this turns out to work in the case of usual field theories.}

The reason for the notation using the boundary symbol $\de$ is that, because of \eqref{e:alphaM},  a vector field $X$ that changes the bulk fields preserving their boundary values (including their transversal jets) is necessarily in the kernel of $\omega_M$. As a result, $F^\de_{\de M}$ only depends on ``boundary data.''
%the boundary values and transversal jets of fields (even though a further reduction may be needed). 
We call an element of $F^\de_{\de M}$ a \textsf{boundary field}.

By construction, there is a unique symplectic form $\omega^\de_{\de M}$ on $F^\de_{\de M}$ such that $\pi^*\omega^\de_{\de M}=\omega_M$. 

A central concept is the space of boundary fields that arise from solutions to EL equations:
\begin{equation}
\label{e:LM}
L_M\coloneqq \pi(EL_M).
\end{equation}
Note that $L_M$
is automatically isotropic with respect to $\omega^\de_{\de M}$. We will later require further structure on $L_M$ (viz., that it is a split lagrangian submanifold; see Definition~\ref{d:splitL}).
%It is convenient to assume that $L_M$ is a submanifold of $F^\de_{\de M}$. 
%\marginpar{We will return to this in... to discuss an interprettion and further conditions.}
%Elements of $L_M$ are precisely those boundary fields that arise from a solution to EL equations. 

To make further structures clearer, but also to present a version that still works when $F^\de_{\de M}$ is singular, it is convenient to perform the reduction in stages,  with the first reduction, which we will denote by  $\Tilde F^\de_{\de M}$, always possible. If also $F^\de_{\de M}$ is smooth, we will then have the following commutative diagram of surjective submersions:
\begin{equation}\label{e:d:Fs}
\begin{tikzcd}
  F_M \arrow[rd] \arrow[r, "\Tilde\pi"] \arrow[dr,"\pi"] & \Tilde F^\de_{\de M}  \arrow[d, "p"]  \\
                                                                                & F^\de_{\de M}
\end{tikzcd} 
\end{equation}
We will call points of $\Tilde F^\de_{\de M}$  \textsf{preboundary fields}. We study them in the next section.

\section{Preboundary fields}\label{s:preboundary}
We already observed that a vector field $X$ on $F_M$ that changes the bulk fields preserving their boundary values (including their transversal jets, which encode the inward\ndash pointing derivative data of the fields at the boundary) is necessarily in the kernel of $\omega_M$ (actually, also in that of $\alpha_M$).\footnote{Alternatively, we may consider vector fields that preserve germs of boundary values. Namely, we require that, for every field $\phi$, $X$ evaluated at $\phi$ is compactly supported in the interior of $M$. This construction produces a different version of the intermediate space $\Tilde F^\de_{\de M}$, but of course the final reduction $F^\de_{\de M}$ is the same.} 
%As this reduction might be singular (it is not in the examples we describe in this note), it is convenient to perform this reduction in stages, with the first reduction always possible. Another advantage of this reduction by stages is that it helps enlighten the various structures we are going to introduce.
 We now consider the distribution consisting of only such vector fields and denote by $\Tilde F^\de_{\de M}$ its leaf space and by 
\[
\Tilde\pi\colon F_M\to \Tilde F^\de_{\de M}
\]
the canonical projection.
%The reason for the notation using the boundary symbol $\de$ is that $\Tilde F^\de_{\de M}$ is the space of boundary values and transversal jets of bulk fields. 
This is a space of fields on $\de M$ (with the transversal jets now being new  fields). As already mentioned, we call an element of $\Tilde F^\de_{\de M}$ a \textsf{preboundary field}.

Note that $\Tilde\pi$ is a surjective submersion (corresponding to restricting fields and their transversal jets to the boundary) and that there is a unique closed local $2$\ndash form $\Tilde\omega^\de_{\de M}$ on $\Tilde F^\de_{\de M}$ such that $\Tilde\pi^*\Tilde\omega^\de_{\de M}=\omega_M$.  

The $2$\ndash form $\Tilde\omega^\de_{\de M}$ is in general still degenerate, so a further reduction will be needed, leading again to $F^\de_{\de M}$ with canonical projection
$p\colon F_M\to F^\de_{\de M}$. %We will return to this later.

There is also a unique local $1$\ndash form $\Tilde\alpha^\de_{\de M}$ on $\Tilde F^\de_{\de M}$ such that $\Tilde\pi^*\Tilde\alpha^\de_{\de M}=\alpha_M$. Moreover, we have $\Tilde\omega^\de_{\de M}=\delta \Tilde\alpha^\de_{\de M}$ and 
\begin{equation}\label{e:deltaStilde}
\delta S_M = el_M + \Tilde\pi^*\Tilde\alpha^\de_{\de M}.
\end{equation}  

%A central concept is the space of boundary fields that arise from solutions to EL equations:
We now define the analogue of \eqref{e:LM}, namely, the space of preboundary fields that can be extended to a, not necessarily unique, solution to the EL equations: 
%Since we are quotienting by some of the fields in the kernel of $\omega_M$, 
\[
\Tilde L_M\coloneqq \Tilde\pi(EL_M).
\]
Note that $\Tilde L_M$
is automatically isotropic with respect to $\Tilde\omega^\de_{\de M}$ and that $L_M=p(\Tilde L_M)$.
It is convenient to assume that $\Tilde L_M$ is a submanifold of $\Tilde F^\de_{\de M}$. 
%Elements of $\Tilde L_M$ are precisely those preboundary fields that can be extended to a, not necessarily unique, solution to the EL equations. 

\subsection{Composition  of evolution relations}
Note that $\Tilde F^\de_{\Sigma}$ is actually defined for every manifold $\Sigma$ that can arise as a boundary component of a space--time manifold $M$.
Moreover, if we denote by $\{\de_s M\}_{s=1,\dots,k}$ the connected components of $\de M$, we have\footnote{Here $\prod$ denotes the cartesian product of the given spaces. In the second formula, the pulback of $\Tilde\alpha^\de_{\de_s M}$ by the projection to the corresponding factor is understood.}
\[
\Tilde F^\de_{\de M}=\prod_{s=1}^k \Tilde F^\de_{\de_s M}
\]
and
\[
\Tilde\alpha^\de_{\de M} = \sum_{s=1}^k \Tilde\alpha^\de_{\de_s M}.
\]

%A particularly important case is $M=\Sigma\times [a,b]$, with $\Sigma$ having empty boundary In this case, $\Tilde F^\de_{\de M}=\Tilde F^\de_{\Sigma\times\{a\}}\times\Tilde F^\de_{\Sigma\times\{b\}}$. 

%\begin{Rem}[Composition]\label{r:comp}
Suppose now we cut the manifold $M$ along a hypersurface $\Sigma$ into two pieces $M_1$ and $M_2$, so we can recover $M$ as the gluing of $M_1$ and $M_2$ along their common boundary component $\Sigma$ (we assume here that $\Sigma$ does not cut the boundary of $M$). We have
\[
\de M = \de_1 M\sqcup\de_2M,\quad
\de M_1 = \de_1 M\sqcup\Sigma,\quad
\de M_2 = \de_2 M\sqcup\Sigma,
\]
where $\de_i M$  denotes the component of the boundary of $M$ that is also a component of the boundary of $M_i$.

A solution to the EL equations on $M$ is then uniquely given by a pair of solutions to the EL equations on $M_1$ and $M_2$ that match on $\Sigma$. More precisely, denote by $\Tilde\pi_i\colon F_{M_i}\to \Tilde F^\de_{\Sigma}$, $i=1,2$, the map that corresponds to restricting fields on $M_i$ and their transversal jets to the boundary component $\Sigma$. We then have the fiber product formula
\begin{equation}\label{e:compEL}
EL_M = EL_{M_1}\times_{\Tilde F^\de_{\Sigma}}EL_{M_2}
=\{(\phi_1,\phi_2)\in EL_{M_1}\times EL_{M_2} : \Tilde\pi_1(\phi_1)=\Tilde\pi_2(\phi_2)\}.
\end{equation}
Analogously, we have
\begin{equation}\label{e:compLtilde}
\Tilde L_M = \{(\psi_1,\psi_2)\in \Tilde L_{M_1}\times \Tilde L_{M_2} : \pi_{\Sigma,1} (\psi_1)=\pi_{\Sigma,2}(\psi_2)\},
\end{equation}
where $\pi_{\Sigma,i}$ is the canonical projection from $\Tilde F^\de_{\de M_i}$ to $\Tilde F^\de_{\Sigma}$. 

It is more instructive to view the above formulae using the language of relations and correspndences.
\begin{Def}
A \textsf{relation} from the set $A$ to the set $B$ is a subset of $A\times B$. If $R_{AB}$ is a relation from $A$ to $B$ and 
$R_{BC}$ is a relation from $B$ to $C$, one defines the composition
\[
R_{BC}\circ R_{AB} \coloneqq \{(a,c)\in A\times C : \exists b\in B\ (a,b)\in R_{AB}\ (b,c)\in R_{BC}\}
\]
as a relation from $A$ to $C$.\footnote{A particular example of a relation is the graph of a map. The composition of  graphs, as relations, is the same as the graph of the usual composition of maps.} A \textsf{correspondence} from the set $A$ to the set $B$ is a map $C\to A\times B$ where $C$ is some set. The composition of correspondences is defined as their fibered product. The image of a correspondence is a relation, and the composition of such images, as relations, is  the image of the composition of the correspondences.
\end{Def}
In this setting we may
%It is more instructive to 
view $\Tilde L_{M_1}$ ($EL_{M_1}$) as a relation (correspondence) from $\Tilde F^\de_{\de_1 M}$
to $\Tilde F^\de_{\Sigma}$ and $\Tilde L_{M_2}$ ($EL_{M_2}$) as a relation  (correspondence) from  $\Tilde F^\de_{\Sigma}$ to $\Tilde F^\de_{\de_2 M}$. In this language, \eqref{e:compLtilde} is the composition of relations, whereas \eqref{e:compEL} is the  composition of correspondences. 

For this reason, given a space--time manifold $M$ and a decomposition of its boundary into two components,
we call $\Tilde L_M$ ($EL_M$) the \textsf{evolution relation} (\textsf{evolution correspondence}).

\begin{Rem}[Evolutionary flow]
A particular case is $M=\Sigma\times [t_1,t_3]$, with $\Sigma$ having empty boundary and $t_1<t_3$. 
We now pick $t_2\in(t_1,t_3)$ and cut $M$ along $\Sigma\times\{t_2\}$. In this case, $M_i=\Sigma\times [t_i,t_{i+1}]$,
and we have 
\[
\Tilde L_{\Sigma\times [t_1,t_3]} =  \Tilde L_{\Sigma\times [t_2,t_3]} \circ \Tilde L_{\Sigma\times [t_1,t_2]}. 
\]
This law may be thought of as the general version of time evolution (we will see that in many cases there is a further reduction in which the evolution relation actually becomes the graph of a flow).
\end{Rem}

%It is more instructive to see $\Tilde L_M$ as obtained as follows: first, we consider $\Tilde L_{M_1}\times \Tilde L_{M_2}$ as a submanifold of $\Tilde F^\de_{\de M_1}\times\Tilde F^\de_{\de M_2}=\Tilde F^\de_{\de M}\times \Tilde F^\de_{\Sigma}\times \Tilde F^\de_{\Sigma}$; then we intersect it with $\Tilde F^\de_{\de M}\times  \Delta_{\Tilde F^\de_{\Sigma}}$, where $\Delta_{\Tilde F^\de_{\Sigma}}$ denotes the diagonal in $\Tilde F^\de_{\Sigma}\times \Tilde F^\de_{\Sigma}$; finally we project to $\Tilde F^\de_{\de M}$ by the map $\pi_\Delta\colon \Tilde F^\de_{\de M}\times  \Delta_{\Tilde F^\de_{\Sigma}}\to\Tilde F^\de_{\de M}$ that forgets the second factor. Therefore, we have
%\[
%\Tilde L_M = \pi_\Delta\left(\Tilde L_{M_1}\times \Tilde L_{M_2} \cap \Tilde F^\de_{\de M}\times  \Delta_{\Tilde F^\de_{\Sigma}}\right).
%\]
%\end{Rem}

\begin{Rem}[Relative structures]\label{r:relstr}
Suppose again we cut the manifold $M$ along a hypersurface $\Sigma$ into two pieces $M_1$ and $M_2$.
 Since $S_M= S_{M_1}+S_{M_2}$, equation \eqref{e:deltaStilde} for $M$, $M_1$, and $M_2$ implies
\[
\Tilde\alpha^\de_{\Sigma,1}=-\Tilde\alpha^\de_{\Sigma,2},
\]
where $\Tilde\alpha^\de_{\Sigma,i}$ denotes the $1$\ndash form on $\Tilde F^\de_{\Sigma}$ viewed as a boundary component of $M_i$. Therefore, with analogue notation, we have
\[
\Tilde\omega^\de_{\Sigma,1}=-\Tilde\omega^\de_{\Sigma,2}.
\]
This shows that choosing a boundary component $\Sigma$ as the target or the source of the evolution relation %changes the sign of its associated structures
yields opposite sign to the associated $2$\ndash form.\footnote{If our theory is defined on oriented space--time manifolds and the Lagrangian is viewed as a top form, then we view $a_M$ in \eqref{e:alphaM} also as a top form (now on the boundary). In this setting, $\alpha_M$ depends on a choice of orientation, and we choose the orientation of $\Sigma$ in opposite ways depending on its being viewed as the source or the target of the evolution relation.}
\end{Rem}

\subsection{Cauchy data}
Let $\Sigma$ be a manifold that can appear as a boundary of a space--time manifold.
We then consider the theory on $M_\epsilon=\Sigma\times[0,\epsilon]$ with $\epsilon$ a positive real number.
We view $\Tilde L_{M_\epsilon}$ as a relation from $\Tilde F^\de_{\Sigma}=\Tilde F^\de_{\Sigma\times\{0\}}$ to $\Tilde F^\de_{\Sigma\times\{\epsilon\}}$. We define
\[
\Tilde C_\Sigma \coloneqq \{c\in \Tilde F^\de_{\Sigma} : \exists \epsilon > 0\ \exists u\in F^\de_{\Sigma\times\{\epsilon\}}\ (c,u)\in \Tilde L_{M_\epsilon}\}.
\]
The space $\Tilde C_\Sigma$ consists of all the preboundary fields that can be extended to solutions on some cylinder. Note that in general $\Tilde C_\Sigma$ is not the whole of $\Tilde F^\de_{\Sigma}$ (in the examples we will see that $\Tilde C_\Sigma$ is determined by the constraints of the theory, viz., by those EL equations that do not describe an evolution in the transverse direction). We call $\Tilde C_\Sigma$ the \textsf{space of Cauchy data}. 

More generally, we denote by $\Tilde C_{\de M}$ the product of the spaces of Cauchy data associated to each boundary component of $M$, viewed as a subset of $\Tilde F^\de_{\de M}$.

We may restrict the closed $2$\ndash form $\Tilde\omega^\de_{\Sigma}$ to $\Tilde C_\Sigma$ and denote this restriction by $\Tilde\omega^C_{\Sigma}$. In general, it will be degenerate. In some good cases, $\Tilde C_\Sigma$ is a submanifold and
the leaf space $\underline{C_\Sigma}$ of the distribution given by the vector fields in the kernel of $\Tilde\omega^C_{\Sigma}$ has a smooth structure such the the canonical projection $\Tilde p_C\colon\Tilde C_\Sigma\to\underline{C_\Sigma}$ is a surjective submersion. In this case, there is a unique symplectic structure $\underline\omega^C_\Sigma$ on $\underline{C_\Sigma}$ satisfying $\Tilde\omega^C_{\Sigma}=\Tilde p_C^*\underline\omega^C_\Sigma$. We call the symplectic manifold $(\underline{C_\Sigma},\underline\omega^C_\Sigma)$ the
\textsf{reduced phase space} associated to $\Sigma$. If this constuction works for every boundary component of $M$, we write $\underline{C_{\de M}}$ for the product of the symplectic manifolds associated to each component. The evolution relation $\Tilde L_M$ may be intersected with $\Tilde C_{\de M}$ and projected to $\underline{C_{\de M}}$ giving rise to the reduced evolution relation $\underline L_M$. This describes the correct evolution of the system. (In some particularly good cases, $\underline{L}_{\Sigma\times[a,b]} $ will be the graph of a hamiltonian flow.)

%The directions we forget in the reduction from $\Tilde C_{\de M}$ to $\underline{C_{\de M}}$ are physically irrelevant. 

\section{Boundary fields}
%\subsection{Boundary fields}%{The first reduction}
We now return to the full reduction $F_M\to F^\de_{\de M}$ mentioned in Section~\ref{s:reduction}; see the diagram 
\eqref{e:d:Fs} for reference.

We assume here that the quotient $F^\de_{\de M}$ has a smooth manifold structure such that the canonical projection $\pi\colon F_M\to F^\de_{\de M}$ is a surjective submersion and recall that an element of $F^\de_{\de M}$ is called a \textsf{boundary field}.

The excursus through the space of preboundary fields in Section~\ref{s:preboundary} and the fact that we can obtain $F^\de_{\de M}$ from $\Tilde F^\de_{\de M}$ by a further reduction allow us to see more structure.

The first consequence is that, for every manifold $\Sigma$ that can arise as a boundary component of a space--time manifold $M$, we may define $F^\de_{\Sigma}\coloneqq p(\Tilde F^\de_{\Sigma})$. This implies that, 
if we denote by $\{\de_s M\}_{s=1,\dots,k}$ the connected components of $\de M$, we have
\[
F^\de_{\de M}=\prod_{s=1}^k F^\de_{\de_s M}
\]
and
\[
\omega^\de_{\de M} = \sum_{s=1}^k \omega^\de_{\de_s M}.
\]

The second consequence is that $L_M=\pi(EL_M)=p(\Tilde L_M)$ is also a relation, which we keep calling the \textsf{evolution relation}, once we choose a decomposition of $\de M$ into two components. Moreover, the evolution relation of a manifold cut along a hypersurface is the composition of the evolution relations of its parts.
Remark~\ref{r:relstr} now implies that 
\[
\omega^\de_{\Sigma,1}=-\omega^\de_{\Sigma,2},
\]
so $F^\de_\Sigma$ has opposite symplectic structure if it is regarded as target instead of source space.

\begin{Dig}[Composition of isotropic relations]
Following \cite{Wei71,Wei10},
we can give a better description of the composition of evolution relations, which we now assume to be submanifolds, in terms of symplectic geometry. 
%This allows us to give a better description of the composition of evolution relations, which we now assume to be submanifolds. 
Observe that $L_M$ can be obtained as follows: first, we consider $L_{M_1}\times L_{M_2}$ as a submanifold of $ F^\de_{\de M_1}\times F^\de_{\de M_2}=F^\de_{\de_1 M}\times F^\de_{\Sigma}\times F^\de_{\Sigma}\times F^\de_{\de_2 M}$; then we intersect it with $F^\de_{\de_1 M}\times  \Delta_{F^\de_{\Sigma}}\times F^\de_{\de_2 M}$, where $\Delta_{F^\de_{\Sigma}}$ denotes the diagonal in $F^\de_{\Sigma}\times F^\de_{\Sigma}$; finally we project to $F^\de_{\de M}$ by the map $\pi_\Delta\colon F^\de_{\de_1 M}\times  \Delta_{F^\de_{\Sigma}}\times F^\de_{\de_2 M}\to F^\de_{\de M}$ that forgets the middle factor. Therefore, we have
\[
L_M = \pi_\Delta\left(L_{M_1}\times L_{M_2} \cap  F^\de_{\de_1 M}\times  \Delta_{F^\de_{\Sigma}}\times F^\de_{\de_2 M}\right).
\]
The point of this construction is that the map $\pi_\Delta$ performs the symplectic reduction of the restriction of $\omega^\de_{F^\de_{\de M_1}\times F^\de_{\de M_2}}$ to $F^\de_{\de_1 M}\times  \Delta_{F^\de_{\Sigma}}\times F^\de_{\de_2 M}$. The crucial point here is that $\Delta_{F^\de_{\Sigma}}$ is a lagrangian submanifold of
$F^\de_{\Sigma}\times F^\de_{\Sigma}$ when the two factors are given, as is the case here, opposite symplectic structures. %This procedure is the natural way of composing isotropic relations. 
\end{Dig}

\subsection{Boundary conditions}
In this note we assume that there is a local $1$\ndash form $\alpha^\de_{\de M}$ such that $\pi^*\alpha^\de_{\de M}=\alpha_M$.\footnote{This condition actually fails in some interesting examples (like a charged particle in a magnetic field or the WZW model). %This is the only condition that may actually fail in some interesting examples. % like, e.g., the charged particle in magnetic field or the WZW theory. In these more general examples
What still happens in these examples however is that $\alpha^\de_{\de M}$ exists as a $1$\ndash form connection instead of a global $1$\ndash form.}
It follows that
\begin{enumerate}%[i]
  \item $\alpha^\de_{\de M}$ is uniquely determined,
  \item $\omega^\de_{\de M}=\delta \alpha^\de_{\de M}$, and
  \item equation \eqref{e:deltaSonFM} yields
\begin{equation}\label{e:deltaSondeM}
\delta S_M = el_M + \pi^*\alpha^\de_{\de M}.
\end{equation}  
  \end{enumerate}

%We are now going to see that it is natural to require more than isotropicity.

%\subsection{Boundary conditions}
In order to select solutions to the EL equations that could also be regarded as critical points of $S_M$, one may then choose a submanifold $B$ of $F^\de_{\de M}$ with the following two properties:
\begin{enumerate}
\item The restriction of $\alpha^\de_{\de M}$ to $B$ vanishes, so $\delta S_M=el_M$ on $\pi^{-1}(B)$.
\item $B$ intersects $L_M$ transversally, so each solution is isolated.
\end{enumerate}
We will see that such submanifolds  $B$ are actually related to possible boundary conditions of the theory.

The first condition implies that $B$ should also be isotropic. The second condition in particular entails that $T_u B$ and $T_u L_M$ are complementary subspaces of $T_u F^\de_{\de M}$, for every intersection point $u$ (i.e., the boundary value  in $B$ of a solution). Note that $\omega^\de_{\de M}$ at $u$ defines a %nondegenerate skew\ndash symmetric bilinear form 
symplectic form
on the vector space $T_uF^\de_{\de M}$, and that $T_uB$ and $T_u L_M$ are isotropic with respect to it. We say that $T_uB$ and $T_uL_M$ are split lagrangian subspaces, according to the following
\begin{Def}\label{d:splitL}\cite{CC21}
An isotropic subspace of a symplectic vector space is called split lagrangian if it admits an isotropic complement.
\end{Def}
The motivation for this terminology is that an isotropic subspace of a finite\ndash dimensional symplectic space %(i.e.,  a vector space endowed with a nondegenerate skew-symmetric bilinear form) 
is split lagrangian precisely when it is half\ndash dimensional, so lagrangian according to the usual definition. In the infinite\ndash dimensional symplectic case, there are several definitions, of different strength, of what lagrangian should be, split lagrangian %being as in the above definition 
being the strongest.\footnote{Note that, if $L$ is split lagrangian in a symplectic vector space $V$, then its symplectic orthogonal $L^\perp$ is equal to $L$, so the symplectic reduction of $L$ is a point. The condition $L=L^\perp$ is usually taken to define lagrangian subspaces. In particular, split lagrangian implies lagrangian.}

If we want that every point $u$ of $ L_M$ could be obtained by such an intersection, we have to require that $T_u L_M$ should be split lagrangian for every $u\in  L_M$: we say, in this case, that $ L_M$ is a split lagrangian submanifold.\footnote{The submanifolds $B$ that select isolated solutions have also to be split lagrangian, but in principle it is enough to require this in a neighborhood of the intersection points.}

By construction,
$L_M$ is always isotropic. %, so this is true in every Lagrangian field theory. 
On the other hand, the condition that $L_M$ should be split lagrangian is required for the theory to behave well when boundary conditions are imposed and does not necessarily hold in general. Also note that this may depend on the choice of space--time manifold $M$. For a Lagrangian theory to be good, we have to require that there is a class of manifolds with boundary  for which the condition holds (this class should at least contain manifolds of the form $\Sigma\times I$, with $I$ a time interval and $\Sigma$ belonging to a physically interesting class, e.g., being space-like). The theories encountered in physics (scalar or spinor field theories and Yang--Mills theories, pure or coupled to the former) are good in this sense.
%If the condition holds only for a certain class of manifolds with boundary,\footnote{If $M$ has no boundary, there is not problem, as $\Tilde\alpha^\de_{\de M}=0$, so every submanifold of $\Tilde F^\de_{\de M}$ is automatically isotropic.} it is only on these manifolds that makes sense to consider the theory.\footnote{One can also invent such bad theories for which the condition is violated by every space--time manifold with nonempty boundary.}

In the case of gravity, we will see that one has to be a bit more careful in the definition of $F_M$ in order for the theory to be well behaved on a large class of space--time manifolds with boundary: namely, one has to impose some open condition for the fields in $F_M$ on the boundary of $M$.\footnote{We call this the boundary metric nondegenerate PC theory in Section~\ref{s-boundary metric nondegenerate PC theory}.} 
We do not view this as a boundary condition (such as the one imposed by the intersection  with an isotropic $B$) which would instead correspond to a closed condition.

\begin{Rem}[Coisotropic Cauchy spaces]
One can show that, if $L_{\Sigma\times[0,\epsilon]}$ is lagrangian for all $\epsilon>0$, then $C_\Sigma$ is coisotropic (i.e., the symplectic orthogonal of its tangent bundle is contained in the  tangent bundle itself). In this case, $C_\Sigma$ is locally given by first\ndash class constraints (using Dirac's terminology). Moreover, the reduced phase space $\underline C_\Sigma$ is obtained modding out by the hamiltonian vector fields of the constraints (which one can regard as gauge transformations).
This is the typical case we will analyze in the rest of this note.
\end{Rem}

\section{Examples}\label{s-exaafterKT}
We now briefly return to the examples discussed in Section~\ref{s:prel-exa}, in light of the general KT method.

In the case of mechanics, with target $\bbR$ as in Section~\ref{s:mechanics}, we have $\underline C=C=F^\partial =T\bbR$, with coordinates interpreted as initial position $q$ and initial velocity $v$. The symplectic form is $m\,\delta v\,\delta q$. If we set $p\coloneqq mv$, we identify these spaces with $T^*\bbR$ with canonical symplectic form $\delta p\,\delta q$. The evolution relation $\underline L_{[a,b]}\subset \underline C\times \underline C=T^*\bbR\times T^*\bbR$ is the graph of the hamiltonian flow from time $a$ to time $b$.

In the degenerate example of Section~\ref{s:degenerate}, we have $\underline C=C=F^\partial =T^*S^{n-1}$ with canonical symplectic structure. The evolution relation $\underline L_{[a,b]}$ is just the graph of the identity.

In the case of the scalar field of Section~\ref{s:scalar}, we have $\underline C_\Sigma=C_\Sigma=F^\partial_\Sigma =C^\infty(\Sigma)\oplus C^\infty(\Sigma)\ni(\phi,\phi_0)$ and symplectic form as in \eqref{e:symplscalar}. 
The evolution relation $\underline L_{\Sigma\times [a,b]}$ consists of pairs $((\phi^a,\phi^a_0),(\phi^b,\phi^b_0))$ with $(\phi^b,\phi^b_0)$ the evaluation at time $b$ of the solution $\phi$ and its time derivative $\de_0\phi$ of the Cauchy problem with initial condition $(\phi^a,\phi^a_0)$ at time $a$.

In the case of electromagnetism of Section~\ref{s:EM}, we have $F^\partial_\Sigma=\frX(\Sigma)\oplus\frX(\Sigma)\ni(A,E)$ and symplectic form as in Remark~\ref{r:elF}.\footnote{$\frX(\Sigma)$ denotes the vector fields on $\Sigma$.}
The Cauchy space $C_\Sigma$ consists of the pairs $(A,E)$ with $E$ satisfying the Gauss law $\Div E=0$. The reduced phase space $\underline C_\Sigma$ is the quotient of $C_\Sigma$ by gauge transformations $A_i\mapsto A_i+\de_i\lambda$. The evolution relation $\underline L_{\Sigma\times [a,b]}$ consists of pairs $(([A]^a,E^a),([A]^b,E^b))$ with $([A]^b,E^b)$ the evaluation at time $b$ of the solution of the Cauchy problem for the time\ndash dependent Maxwell equations with initial condition $([A]^a,E^a)$ at time $a$.\footnote{$[A]$ denotes the gauge equivalence class of $A$.}

% gravity 1: EH
% gravity 2: PC

\section{Gravity in the coframe formalism}\label{s-boundary metric nondegenerate PC theory}
We now come to the example of the reduced phase space we want to discuss in this note: gravity. In its usual formulation, gravity is described by a Lorentzian metric $g$ and by the Einstein--Hilbert action. The construction of the reduced phase space turns out to be more convenient and elegant in its formulation via (co)frames (using ideas mainly stemming from Cartan and Weyl), usually called the Palatini--Cartan (PC) theory or the first\ndash order formulation of gravity (see, e.g., the reviews \cite{HHKN76} and \cite{Tecch20} and references therein).
We will mainly follow \cite{CCS20}.\footnote{For the treatment with Dirac's method, see, e.g., \cite{Ash87,Bar00}. For the covariant phase space formalism, see \cite{OS19,BMBV21}.}

The first dynamical field in this formulation is a smooth field of frames on our space--time manifold $M$. The dynamical metric $g$ is recovered using the frame starting from some reference metric (whose choice is irrelevant). 
A frame field may also be viewed as an isomorphism between a reference vector bundle $\calV$ on $M$ and the tangent bundle $TM$. It turns out that the computations are handier if one uses the inverse of a frame field, which we call a coframe (field). It turns out that to construct the action one also needs a second dynamical field, a connection for the orthogonal frame bundle.
This construction renders gravity closer in spirit to gauge theories.
%\subsection{The data}
More precisely, given a $d$\ndash dimensional manifold $M$, the data are
\begin{itemize}
\item a vector bundle $\calV$ over $M$ isomorphic to $TM$ (over the identity map on $M$), and
\item a smooth family $\eta$  of inner products with signature $(1,d-1)$ on the fibers of $\calV$.\footnote{Note that this is possible if and only if $M$ admits a Lorentzian structure.}
\end{itemize}

%\subsection{The fields}
The first dynamical field is an isomorphism $e\colon TM\to \calV$ (over the identity map on $M$), called the coframe. The dynamical gravity field $g$ is recovered pulling $\eta$ back by $e$. Namely, if we pick coordinates on $M$ (we use Greek indices for them) and a local basis for the fiber of $\calV$ (we use the first lower case letters of the Latin alphabet for its components), then the coframe $e$ is locally specified by its components $e_\mu^a$ (an invertible $d\times d$ matrix at each point of $M$).\footnote{We denote by $\bar e\colon \calV\to TM$, with components $\bar e^\mu_a$, the inverse of $e$. At each point $x\in M$, for each $a$ we have a tangent vactor $\bar e_a(x)$ with components $(\bar e^1_a(x),\dots,\bar e^d_a(x))$. The vectors $\bar e_1(x), \dots,\bar e_d(x)$ are linearly independent, so they yield a frame for the tangent space $T_xM$. For this reason, $\bar e$ is called a frame and $e$ a coframe.}
 If we denote the components of $\eta$ by $\eta_{ab}$, we get the components of the space--time metric $g$ as
\begin{equation}\label{e:g}
g_{\mu\nu}=e_\mu^a\,e_\nu^b\,\eta_{ab}.
\end{equation}
Note that $g$ has only $d(d+1)/2$ independent components, so the $d^2$ components of $e$ are $d(d-1)/2$ too many. Essentially, we have to consider the internal orthogonal group to mod out the extra components. 

%\subsection{Connection, torsion and curvature}
In particular, we have to introduce a connection $A$ to compute the covariant derivative $\bD$ of sections of $\calV$. If we denote by $\phi^a$ the components of a section $\phi$ of $\calV$, we have
\[
(\bD_\mu\phi)^a=\de_\mu\phi^a+A^a_{\mu b}\phi^b.
\]
In particular we want this connection to be compatible with the metric $\eta$ or, equivalently, its inverse, whose components we denote by $\eta^{ab}$. Namely,
 \[
0= (\bD_\mu\eta)^{ab}=\de_\mu\eta^{ab}+A^a_{\mu c}\eta^{cb}+A^b_{\mu c}\eta^{ac}.
 \]
It is convenient to decompose
\[
A^a_{\mu b}=\eta_{bc}(\alpha_\mu^{ac}+\omega_\mu^{ac})
\]
with $\alpha$ symmetric and $\omega$ antisymmetric in the upper indices. The above equations then read
\[
0=\de_\mu\eta^{ab}+\alpha_\mu^{ab}+\alpha_\mu^{ba},
\]
which can be solved as
\[
\alpha_\mu^{ab}=-\frac12\de_\mu\eta^{ab}.
\]
Note that there are no further conditions on $\omega$. The covariant derivative can then be written as
\[
(\bD_\mu\phi)^a=(D_\mu\phi)^a+\omega_\mu^{ab}\eta_{bc}\phi^c
\]
with
\[
(D_\mu\phi)^a=\de_\mu\phi^a-\frac12\de_\mu\eta^{ab}\,\eta_{bc}\phi^c.
\]
The second field of the theory is $\omega$. If we pick a reference $\omega_0$, then $\omega-\omega_0$ is a $1$\ndash form taking values in $\Lambda^2\calV$.

It is convenient to introduce the following piece of terminology. A $k$\ndash form on $M$ taking values in $\Lambda^l\calV$ will be called a $(k,l)$\ndash form or a form of type $(k,l)$. In particular, $e$ is a $(1,1)$\ndash form and $\omega-\omega_0$ is a $(1,2)$\ndash form. The operator $\bD$ acts on $(k,l)$\ndash forms producing $(k+1,l)$\ndash forms by skew-symmetrizing the lower space--time indices. We call this operation the covariant differential and denote it $\dd_\omega$. %, or just $D$ if $\omega$ is equal to zero. 
%In general we write $\dd_\omega\sigma = D\sigma + \omega\cdot \sigma$, where $\cdot$ denotes 

The curvature $2$\ndash form $F_\omega$ is the $(2,2)$\ndash form
defined via $\dd_\omega^2\phi = F_\omega\cdot\phi$, for any section $\phi$ of $\calV$, with
$(F_\omega\cdot\phi)^a=\eta_{bc}F_\omega^{ac}\phi^b$ in local coordinates.

%\subsection{Palatini--Cartan theory}
{}From now on, we focus on the four\ndash dimensional case, $d=4$, with cosmological constant $\Lambda$. For a space--time manifold $M$, admitting $\calV$ as above, we define the action functional (the Palatini--Cartan action)
\[
S_M[e,\omega]\coloneqq\int_M\left(
\frac12e^2F_\omega+\frac\Lambda{24}e^4
\right),
\]
where $e^k$ denotes the $(k,k)$\ndash form obtained by taking the wedge product of $k$ factors $e$. Here the wedge product is done both with respect to the form and to the $\calV$\ndash components.\footnote{In local coordinates, the integrand is then given, up to a normalization factor, by
\[
\epsilon_{abcd}\epsilon^{\mu\nu\rho\sigma}
\left(\frac12 e^a_\mu e^b_\nu(F_\omega)^{cd}_{\rho\sigma}+\frac\Lambda{24}e^a_\mu e^b_\nu e^c_\rho e^d_\sigma\right).
\]}
It can be checked \cite{CCS20} that $(4,4)$\ndash forms, in particular the integrand in the action, are canonically identified with densities. By taking a variation of $S_M$, as in \eqref{e:deltaSonFM}, we get
\[
\alpha_M=\int_{\de M}\frac12 e^2\,\delta\omega
\]
and the EL equations
\begin{subequations}\label{e:EL}
\begin{align}
eF_\omega+\frac\Lambda6e^3&=0,\label{se:ELeF}\\
e\dd_\omega e &=0.\label{se:ELede}\\
\intertext{One can check that the second equation is actually equivalent to}
\dd_\omega e& =0,\label{se:ELde}
\end{align} 
\end{subequations}
as a consequence of the fact that $e$ is nondegenerate. 
Moreover, for a given $e$, there is a unique connection $\omega(e)$ satisfying this equation. The connection $\omega(e)$ corresponds, via $e$, to the Levi-Civita connection for $g$ as in \eqref{e:g}. Finally, the equation $eF_{\omega(e)}+\frac\Lambda6e^3=0$ corresponds to the Einstein equation, with cosmological constant $\Lambda$, for $g$.

\begin{Not}\label{n:eomegade}
{}From now on we will put a tilde on the name of the bulk fields, so we will write $\Tilde e=\Tilde e_\mu\dd x^\mu$ and $\Tilde\omega=\Tilde\omega_\mu\dd x^\mu$ (with $\mu$ running from $1$ to $4$). We will reserve the notation without the tilde for the fields on the boundary: $e=e_i\dd x^i$ and $\omega=\omega_i\dd x^i$ (with $i$ running from $1$ to $3$ and $x\in\de M$). Note that the fields on the boundary take values in the restriction $\calV|_{\de M}$ of $\calV$ to the boundary. The nondegeneracy condition now says that the three components $e_1(x)$, $e_2(x)$, and $e_3(x)$ are linearly independent in $\calV_x$ for every $x\in\de M$.
\end{Not}

%\subsection{The boundary structure}
{}From the form of $\alpha_M$, it might seem that the space of boundary fields consists of $e$ and $\omega$ as 1\ndash forms on the boundary (taking values in $\calV|_{\de M}$ and $\Lambda^2\calV |_{\de M} $, respectively). %on $\de M$. 
 Let us denote this space by $\Hat F^\de_{\de M}$.
% It is useful to introduce some notations. We write $\calV_{\de M}$ for the restriction of $calV$ to the boundary. 
It turns out however that the $2$\ndash form 
\begin{equation}\label{e:preomegaPC}
\omega_M=\delta\alpha_M=\int_{\de M} e\,\delta e\,\delta\omega
\end{equation}
is degenerate, so some reduction is still needed in order to get the space $F^\de_{\de M}$ of boundary fields.
More precisely, if $X$ is a vector field on $\Hat F^\de_{\de M}$ with component $u$ along $e$ and $v$ along $\omega$, we get that $\iota_X\omega_M=0$ if{f} $eu=0$ and $ev=0$. One can check that nondegeneracy of $e$ implies $u=0$. However, $v$ may be nonzero.\footnote{One can check that, for each $x\in\de M$, the space of $v(x)$s satisfying $e(x)\,v(x)=0$ is six\ndash dimensional.} In conclusion, the space $F^\de_{\de M}$ of boundary fields consists of coframes $e$ on the boundary and equivalence classes $[\omega]$ of connections under the equivalence $\omega\sim\omega+v$ with $ev=0$. We denote by $\Hat p$ the canonical projection $\Hat F^\de_{\de M}\to F^\de_{\de M}$.

We now want to discuss the Cauchy data. We then assume $M=\Sigma\times[0,\epsilon]$ and consider the first boundary component $\Sigma=\Sigma\times\{0\}$. We have to split the EL equations \eqref{se:ELeF} and \eqref{se:ELde} into constraints and evolution equations. The constraints are just their restrictions, as differential forms, to the boundary;
%:
%\[
%eF_\omega+\frac\Lambda6e^3=0, \qquad \dd_\omega e = 0.
%\]
%so, using Notation~\ref{n:eomegade}, the constraints looks exactly the same. 
this way, using Notation~\ref{n:eomegade}, we can define the space of Cauchy data $\Hat C_\Sigma$ inside $\Hat F^\de_{\Sigma}$ as
\[
\Hat C_\Sigma=\left\{(e,\omega)\in \Hat F^\de_{\Sigma} : eF_\omega+\frac\Lambda6e^3 = 0,\ \dd_\omega e=0\right\}.
\]
 The space of Cauchy data $C_\Sigma$ is then obtained as $\Hat p(\Hat C_\Sigma)$. 

Proceeding this way is however a bit difficult, since the constraint  function $\dd_\omega e$ is not invariant under the equivalence relation.\footnote{The other constraint function, $eF_\omega+\frac\Lambda6e^3$, is instead invariant upon using $\dd_\omega e=0$. In fact, upon sending $\omega$ to $\omega+v$, it becomes $e\dd_\omega v$ which is equal to $\dd_\omega(ev)-\dd_\omega e\,v$. The first term vanishes by the condition on $v$ and the second by the other constraint.} 
It would indeed be simpler to consider instead the space
\begin{equation}\label{e:C'}
\Hat C'_\Sigma=\left\{(e,\omega)\in \Hat F^\de_{\Sigma} : eF_\omega+\frac\Lambda6e^3 = 0,\ e\dd_\omega e=0\right\}
\supsetneq \Hat C_\Sigma.
\end{equation}
Note that we have replaced the condition $\dd_\omega e=0$ with the strictly weaker condition
\begin{equation}\label{e:edecon}
e\dd_\omega e=0,
\end{equation}
and it is not difficult to show that this is invariant under the equivalence relation.
It turns out that replacing $\Hat C_\Sigma$ with $\Hat C'_\Sigma$ is indeed possible under a simple assumption.
\begin{Def}
We say that $e$ is metric nondegenerate if $g^\de_{ij}\coloneqq e_i^a\,e_i^b\,\eta_{ab}$ is nondegenerate.
\end{Def}
%\begin{Rem}
In order to get to this situation, we have to define the space of fields $F_M$ accordingly. Namely, we have to put the condition that the coframe $\Tilde e$ becomes metric nondegenerate when restricted to the boundary. 
A stricter condition would be that the coframe $\Tilde e$ produces a riemannian metric on $\de M$ (i.e., $\de M$ is space-like). These are open conditions on the space of fields, so they are part of the definition of the theory.\footnote{Reference \cite{CCT20} studies the case when $\de M$ is light-like. For the treatment of this case with Dirac's method, see \cite{AS14}.} We call the PC theory with coframe field required to be metric nondegenerate on the boundary the \textsf{boundary metric nondegenerate PC theory}.
%\end{Rem}
We then have \cite{CS17,CCS20} the
\begin{Thm}\label{t:emnd}
In the boundary metric nondegenerate PC theory,
 $C_\Sigma=\Hat p(\Hat C'_\Sigma)$.
\end{Thm}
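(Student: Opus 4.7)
The inclusion $C_\Sigma \subseteq \Hat p(\Hat C'_\Sigma)$ is immediate from $\Hat C_\Sigma \subseteq \Hat C'_\Sigma$, which follows from $\dd_\omega e = 0 \Rightarrow e\,\dd_\omega e = 0$. The substance of the theorem is the reverse inclusion: given $(e,\omega)\in\Hat C'_\Sigma$, I must produce a $(1,2)$-form $v$ on $\Sigma$ satisfying $ev = 0$ such that $(e,\omega+v)\in\Hat C_\Sigma$. Since $\Hat p$ identifies $(e,\omega)$ with $(e,\omega+v)$ whenever $ev=0$, this yields $\Hat p(e,\omega)\in \Hat p(\Hat C_\Sigma)=C_\Sigma$.

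Membership in $\Hat C_\Sigma$ requires two equations for the shift $v$. The torsion constraint $\dd_{\omega+v}e = 0$ expands to
\[
v\cdot e = -\dd_\omega e,
\]
where $v\cdot e$ denotes the $\calV$-valued 2-form produced by the natural action of $\Lambda^2\calV$ on $\calV$. The curvature constraint $eF_{\omega+v}+\frac{\Lambda}{6}e^3=0$ expands via $F_{\omega+v}=F_\omega+\dd_\omega v+\tfrac12[v,v]$; applying the graded Leibniz rule to $\dd_\omega(ev)=0$ rewrites $e\,\dd_\omega v$ as $(\dd_\omega e)\cdot v$, after which $ev=0$ together with the torsion equation once imposed force the remaining terms to vanish on the three-dimensional boundary $\Sigma$. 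Hence the whole problem reduces to solving the torsion equation within the subspace of $v$'s satisfying $ev=0$.

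The key algebraic input, to be established pointwise, is the following: under metric nondegeneracy of $e$, the linear map
\[
W\colon \{v\in\Omega^1(\Sigma,\Lambda^2\calV|_\Sigma) : ev=0\} \;\longrightarrow\; \{\tau\in\Omega^2(\Sigma,\calV|_\Sigma): e\tau=0\},\qquad v\longmapsto v\cdot e,
\]
is a bijection. That $W$ lands in the stated codomain follows from a short calculation using the antisymmetry in $\Lambda^2\calV$ and the hypothesis $ev=0$. The target vector $-\dd_\omega e$ lies in the codomain precisely because $(e,\omega)\in\Hat C'_\Sigma$ enforces $e\,\dd_\omega e=0$. Granted the lemma, a unique $v$ solving the torsion equation exists, and the theorem follows.

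The bijectivity of $W$ is the main obstacle, and metric nondegeneracy of $e$ is the exact hypothesis that makes it true. At each $x\in\Sigma$, both the domain and the codomain have fibre dimension six (the excerpt already notes this for the domain; for the codomain, surjectivity of $\tau\mapsto e\tau$ under nondegeneracy gives the same count). It therefore suffices to prove injectivity, which I would do by choosing a basis of $\calV_x$ adapted to the splitting $\calV_x=e(T_x\Sigma)\oplus e(T_x\Sigma)^\perp$---available precisely when the induced boundary metric $g^\de$ is nondegenerate---and showing that the two conditions $ev=0$ and $v\cdot e=0$ become an overdetermined linear system on the components of $v$ whose only solution is zero.
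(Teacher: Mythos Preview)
Your route is genuinely different from the paper's. The paper does not attempt to gauge--shift an arbitrary $(e,\omega)\in\Hat C'_\Sigma$ directly into the set $\{\dd_\omega e=0\}$. Instead it goes back to the bulk equation $\Tilde e\,\dd_{\Tilde\omega}\Tilde e=0$, isolates from its transversal component an additional \emph{structural constraint} $\varepsilon\,\dd_\omega e=e\sigma$, and then invokes an external result (Theorem~17 of \cite{CCS20}): for metric nondegenerate $e$ the structural constraint selects a \emph{unique} representative in each $\Hat p$--fibre. This turns $F^\de_\Sigma$ into the concrete gauge--fixed space $F^\varepsilon_\Sigma$, on which $\Hat C_\Sigma$ and $\Hat C'_\Sigma$ manifestly coincide. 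Your linear--algebra lemma on the bijectivity of $W$ is the correct tool for the torsion half, and your sketch of its proof is sound; it is essentially a cousin of the cited theorem, phrased for a different gauge slice.

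The gap is in the curvature half. Writing $F_{\omega+v}=F_\omega+\dd_\omega v+\tfrac12[v,v]$, the Leibniz argument indeed gives $e\,\dd_\omega v=(\dd_\omega e)\wedge v=-(v\cdot e)\wedge v$ once the torsion equation is imposed, but the quadratic piece $\tfrac12\,e[v,v]$ is still there and does \emph{not} vanish from $ev=0$ alone. A direct check with $e^a_i=\delta^a_i$, $v^{0a}_i=0$, $v^{ab}_i=\epsilon^{abi}$ (which satisfies $ev=0$) gives $e[v,v]=6\,\epsilon_1\wedge\epsilon_2\wedge\epsilon_3\,dx^{123}\neq0$; in fact for the general $ev=0$ solution parametrised by a symmetric $3\times3$ matrix $M$ one finds that both $\tfrac12\,e[v,v]$ and $-(v\cdot e)\wedge v$ are proportional to the second elementary symmetric function of the eigenvalues of $M$, but with coefficients that do not cancel. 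So the sentence ``force the remaining terms to vanish on the three--dimensional boundary'' is not justified, and as written you have only produced a representative with $\dd_{\omega+v}e=0$, not one lying in $\Hat C_\Sigma$. The paper sidesteps this entirely: in its gauge $F^\varepsilon_\Sigma$ the curvature constraint is simply carried along, no shift of $\omega$ is needed, and the quadratic term never enters.
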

This means that in studying $C_\Sigma$ we can instead use $\Hat C'_\Sigma$ and work up to the equivalence relation. This turns out to be very convenient, e.g., in proving that $C_\Sigma$ is coisotropic (see Theorem~\ref{t:PCCcoiso} below).

\begin{proof}%[Proof of Theorem \ref{t:emnd}]
Instead of restricting \eqref{se:ELde} to the boundary, 
it is more convenient to go back to \eqref{se:ELede} and separate the constraints and the evolution equations there.
It is clear that \eqref{e:edecon}
is a constraint.
 On the other hand, there are more constraints hidden in the 
transversal part of the equation. We get, using Notation~\ref{n:eomegade} and denoting the transversal index by $n$,
\begin{equation}\label{e:ELeden}
\Tilde e_n(\dd_{\Tilde\omega}\Tilde e)+\Tilde e(\dd_{\Tilde\omega}\Tilde e)_n=0.
\end{equation}
The transversal derivative of $\Tilde e$ in the last term shows that this equation also describes evolution. The problem is that it is not in normal form. In particular, it might happen that the first term is not of the form $\Tilde e$ times something, which would prevent the evolution equation to be solved. The condition for this not to happen is also a constraint.

{}From now on we focus for simplicity on the case when the space of fields $F_M$ is defined by the condition that the coframe $\Tilde e$ produces a riemannian metric $g^\de$ on $\de M$ (i.e., $\de M$ is space-like).\footnote{The general case is treated in \cite{CCS20}.}
%{}From now on we focus for simplicity on the case when the space of fields $F_M$ is defined by the condition that the coframe $\Tilde e$ produces a riemannian metric $g^\de$ on $\de M$ (i.e., $\de M$ is space-like).\footnote{Reference \cite{CCS20} considers the more general condition that $\Tilde e$ is such that the induced metric on $\de M$ is nondegenerate (e.g., $\de M$ could be time-like). Reference \cite{CCT20} studies the case when $\de M$ is light-like. }
This condition in particular implies that the restriction $\calV|_{\de M}$ of $\calV$ admits a global time-like section $\varepsilon$. We now fix such an $\varepsilon$. By nondegeneracy, the three components of $e$ % on the boundary (i.e., the boundary field $e$) 
together with $\varepsilon$ form a basis of $\calV_x$ at every point $x$ of the boundary. This means that on the boundary we can expand $\Tilde e_n = \rho\epsilon + \xi^i e_i =\rho\varepsilon+\iota_\xi e$, where $\rho$ is a (nowhere vanishing) function and $\xi$ is interpreted as a vector field on $\Sigma$ ($\rho$ and $\xi$ are uniquely determined). Inserting this into \eqref{e:ELeden} evaluated on points on the boundary yields, after some algebra, 
\[
\iota_\xi(e\dd_\omega e)+
\rho\varepsilon\dd_\omega e + e((\dd_{\Tilde\omega}\Tilde e)_n-\iota_\xi\dd_\omega e) = 0.
\]
The first term vanishes upon using the constraint \eqref{e:edecon}, so we can ignore it. %Since $\rho$ is nowhere vanishing, 
We then see that the additional constraint, callled the structural constraint in \cite{CCS20}, is 
\begin{equation}\label{e:strcon}
\varepsilon\dd_\omega e=e\sigma
\end{equation}
for some (1,1)\ndash form $\sigma$.\footnote{The evolution equation at boundary points can now be written in normal form as $(\dd_{\Tilde\omega}\Tilde e)_n=\iota_\xi\dd_\omega e-\rho\sigma$.} We then have, for whatever $\varepsilon$ we chose,
\[
\Hat C_\Sigma=\left\{(e,\omega)\in \Hat F^\de_{\Sigma} : eF_\omega+\frac\Lambda6e^3 = 0,\ e\dd_\omega e=0,\ 
\exists \sigma\ \varepsilon\dd_\omega e=e\sigma
\right\}.
\]

The crucial observation now is 
Theorem~17 in \cite{CCS20} which asserts that, for every metric nondegenerate %boundary field 
$e$, there is a unique $\omega$ in each equivalence class satisfying the structural constraint \eqref{e:strcon}.\footnote{%Here is where the assumption that $e$ is metric nondegenerate is used. 
In \cite{CCT20} it is shown how to modify the result in the light-like case.} 
This means that we can identify the quotient space $F^\de_{\Sigma}$ with the subspace
\[
F^\varepsilon_{\Sigma}\coloneqq\left\{
(e,\omega)\in \Hat F^\de_{\Sigma} : \exists \sigma\ \varepsilon\dd_\omega e=e\sigma\right\}
\]
of $\Hat F^\de_{\Sigma}$.
We therefore get the diagram
\begin{equation}\label{e:d:Fs}
\begin{tikzcd}
  F^\varepsilon_{\Sigma} \arrow[rd] \arrow[r, hook, "i"] \arrow[dr, hook, two heads, "\tau^\varepsilon"] & \Hat F^\de_{\Sigma}  \arrow[d, "\Hat p"]  \\
                                                                               & F^\de_{\Sigma}
\end{tikzcd} 
\end{equation}
where $i$ is the inclusion map and $\tau^\varepsilon$ is a diffeomorphism.
We then obtain the space of Cauchy data $C_\Sigma=\Hat p(\Hat C_\Sigma)$ as $\tau^\varepsilon(C_\Sigma^\varepsilon)$ with
\[
C_\Sigma^\varepsilon \coloneqq i(\Hat C_\Sigma) = \left\{(e,\omega)\in F^\varepsilon_{\Sigma} : eF_\omega+\frac\Lambda6e^3 = 0,\ e\dd_\omega e=0
\right\}.
\]
However, it is clear that $C_\Sigma^\varepsilon =i(\Hat C'_\Sigma)$, with $\Hat C'_\Sigma$ defined in \eqref{e:C'},
so 
$C_\Sigma=\tau^\varepsilon(C_\Sigma^\varepsilon)=\tau^\varepsilon(i(\Hat C'_\Sigma))=\Hat p(\Hat C'_\Sigma)$.
\end{proof}

We conclude this section 
with the following important
\begin{Thm}\label{t:PCCcoiso}
In the boundary metric nondegenerate PC theory,
 $C_\Sigma$ is coisotropic.
\end{Thm}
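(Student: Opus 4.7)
The strategy is to verify that $C_\Sigma$ is defined by first\ndash class constraints. Concretely, I would exhibit smeared functionals on $F^\de_\Sigma$ whose common zero locus is $C_\Sigma$ and whose mutual Poisson brackets vanish on $C_\Sigma$. This is equivalent to coisotropicity, since the Hamiltonian vector fields of the constraints pointwise span the symplectic orthogonal $(TC_\Sigma)^\perp$, and the first\ndash class property says they are tangent to $C_\Sigma$, i.e., $(TC_\Sigma)^\perp \subset TC_\Sigma$.

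By Theorem~\ref{t:emnd} we may realize $C_\Sigma$ as $\tau^\varepsilon(C^\varepsilon_\Sigma)$, where
$C^\varepsilon_\Sigma = \{(e,\omega) \in F^\varepsilon_\Sigma : eF_\omega + \tfrac{\Lambda}{6}e^3 = 0,\ e\,\dd_\omega e = 0\}$.
Both constraint densities are $(3,3)$\ndash forms on $\Sigma$, hence densities in the convention of the paper, and can be smeared against Lagrange multipliers of the appropriate type. They organize naturally as three families of constraint functionals: (i) an internal Lorentz/Gauss constraint, from pairing $e\,\dd_\omega e$ with a section of $\Lambda^2\calV|_\Sigma$; (ii) a tangential diffeomorphism constraint, from contracting both constraint densities with a vector field $\xi \in \frX(\Sigma)$; and (iii) a Hamiltonian scalar constraint, built by inserting the normalizing transverse section $\varepsilon$ from the proof of Theorem~\ref{t:emnd}.

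I would then compute pairwise Poisson brackets of these functionals, using the symplectic form $\omega^\de_\Sigma$ on $F^\de_\Sigma$ reduced from \eqref{e:preomegaPC}, and check that each bracket is a linear combination of the constraints themselves with coefficients that may depend on the fields. The brackets involving the Gauss and diffeomorphism constraints close on the nose, reproducing the semidirect product of internal Lorentz transformations with $\frX(\Sigma)$ acting on itself and on the other constraint densities by their natural representation. The nontrivial bracket is $\{H,H\}$, expected to produce a field\ndash dependent tangential diffeomorphism (the hypersurface deformation algebra in PC variables).

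The main obstacle I anticipate is this $\{H,H\}$ computation. Two technical points make it delicate. First, one must track how the representative $\omega \in F^\varepsilon_\Sigma$ of a class $[\omega] \in F^\de_\Sigma$ drags when $e$ is varied, because the structural constraint \eqref{e:strcon} couples $\omega$ to $e$; this dragging term enters the Hamiltonian vector field of $H$. Second, the resulting expression must be massaged into a manifest multiple of the constraints, and this is where metric\ndash nondegeneracy of $e$ on $\Sigma$ is essential, as it allows one to invert the operator $e \wedge (\cdot)$ on the relevant form spaces, exactly as exploited in the proof of Theorem~\ref{t:emnd} and in~\cite{CCS20}. Once closure is established, first\ndash classness yields coisotropicity of $C_\Sigma$.
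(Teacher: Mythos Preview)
Your strategy is correct and will succeed, but it is more laborious than the paper's. The paper does not descend to $F^\de_\Sigma$ (or to the section $F^\varepsilon_\Sigma$) at all: it works directly on $\Hat F^\de_\Sigma$ with the \emph{degenerate} $2$\ndash form \eqref{e:preomegaPC}, and packages the constraints into just two families, $P_c=\int_\Sigma c\,e\,\dd_\omega e$ and $T_\mu=\int_\Sigma \mu\,(eF_\omega+\tfrac{\Lambda}{6}e^3)$, with $c$ a $(0,2)$\ndash form and $\mu$ a $(0,1)$\ndash form. The nontrivial step is then the \emph{existence} of hamiltonian vector fields for $P_c$ and $T_\mu$ relative to the presymplectic form---this is not automatic on a degenerate space, and it is precisely here that Theorem~\ref{t:emnd} is invoked (it guarantees the constraint functions descend, hence admit hamiltonian vector fields, unique up to the kernel). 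Because the ambiguity in the vector fields lies along the equivalence relation and the functions are invariant along it, the Poisson brackets are well defined and can be checked to vanish on the constraint surface. This sidesteps entirely the ``dragging'' of the representative $\omega$ under variations of $e$ that you correctly flag as the main obstacle in your approach: on $\Hat F^\de_\Sigma$ the variables $e$ and $\omega$ are independent, so no structural\ndash constraint coupling enters the hamiltonian vector fields. Your threefold split into Gauss, diffeomorphism, and scalar constraints is what the paper recovers only \emph{after} the proof, in a remark, by expanding $\mu=\lambda\varepsilon+\iota_\xi e$; keeping $T_\mu$ unsplit makes the bracket computation shorter and avoids isolating the delicate $\{H,H\}$ piece.
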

\begin{proof}
Thanks to Theorem~\ref{t:emnd}, we can work with $\Hat C'_\Sigma$ which we can rewrite as the common zero locus of the functions
\begin{align*}
P_c &= \int_\Sigma c\,e\dd_\omega e,\\
T_\mu &= \int_\Sigma \mu\, \left(eF_\omega+\frac\Lambda6e^3
\right),
\end{align*}
with $c$ a $(0,2)$\ndash form and $\mu$ a $(0,1)$\ndash form. We now compute their hamiltonian vector fields ($X_c$ and $Y_\mu$) with respect to the degenerate $2$\ndash form $\omega_M$ defined in \eqref{e:preomegaPC}: $\iota_{X_c}\omega_M=\delta P_c$, $\iota_{Y_\mu}\omega_M=\delta T_\mu$. The nontrivial fact---a consequence of Theorem~\ref{t:emnd}---is that they exist. They are not unique, but their ambiguity lies in the direction of the equivalence relation, along which the functions are invariant. As a consequence, their Poisson brackets are well-defined. We will only compute their restrictions to $C_\Sigma^\varepsilon$ and show that they vanish. This is enough to prove the theorem.

We present the hamiltonian vector fields and leave the rest of the computation to the reader:
\begin{align*}
X_c(e) &= c\cdot e, & X_c(\omega) &= \dd_\omega c + v,\\
Y_\mu(e) &\approx \dd_\omega\mu & eY_\mu(\omega) &= \mu\,\left(F_\omega +\frac\Lambda2e^2\right),
\end{align*}
where $(c\cdot e)_i^a = \eta_{rs}c^{ar}e_i^s$, $v$ satisfies $ev=0$, and $\approx$ means upon restriction to $C_\Sigma^\varepsilon$.
\end{proof}

\begin{Rem}
Looking into the proof, we see that $X_c$ generates the internal gauge transformations. On the other hand, one can see that $Y_\mu$ generates diffeomorphisms, both tangential and transversal to $\Sigma$. To see this, it is better to expand
$\mu = \lambda\varepsilon + \iota_\xi e$, with $\varepsilon$ as in the proof to Theorem~\ref{t:emnd}, where $\lambda$ is a function and $\xi$ a vector field on $\Sigma$. We then have $T_\mu=H_\lambda+P_\xi$ with
\begin{align*}
H_\lambda &= \int_\Sigma \lambda\,\varepsilon\, \left(eF_\omega+\frac\Lambda6e^3\right),\\
P_\xi &= \int_\Sigma \iota_\xi e\,eF_\omega.
\end{align*}
Computing the hamiltonian vector fields separately, one can see that that of $P_\xi$ generates the tangential vector fields and that of $H_\lambda$ the transversal ones. One can also recognize $H_\lambda$ and $P_\xi$ as the hamiltonian and momentum constraints. The full structure of the Poisson brackets among $P_C$, $H_\lambda$, and $P_\xi$ is described in \cite{CCS20}.
\end{Rem}

\begin{Rem}
In the computations in the proof of Theorem~\ref{t:PCCcoiso}, several integration by parts are involved. If one wants to extend this construction to the case when $\Sigma$ has a boundary, the boundary terms generate an interesting ``corner'' structure. This is better studied using the BFV formalism \cite{FV75,FK77}; see \cite{CC22}. An alternative approach would consist in extending to the context of gravity the results of \cite{RS22}.
% \marginpar{cite us corner; explain in conclusions}
\end{Rem}

\section{Conclusion}
In this note, we have described the construction of the reduced phase space for Lagrangian fields theories following the
geometrical method by Kijowski and W. M. Tulczyjew \cite{KT} and, in particular, we have applied it to the case of coframe gravity in four dimensions.

%AAA\marginpar{summary; mention compatibility of KT with BV-BFV; corners}


\begin{thebibliography}{99}%{99}

\bibitem{AS14} S. Alexandrov and S. Speziale, ``First order gravity on the light front,'' \phr{D91}, 064043 (2015).

\bibitem{Ash87} A. Ashtekar, ``New hamiltonian formulation of general relativity,'' \phr{D36}, 1587\Ndash1602 (1987).

\bibitem{BMBV21} J. F. Barbero, J. Margalef-Bentabol, and E. J. S. Villase\~nor, ``Palatini gravity with nonmetricity, torsion, and boundaries in metric and connection variables,'' \phr{D104}, 044046 (2021).

\bibitem{BB01} G. Barnich and F. Brandt, ``Covariant theory of asymptotic symmetries, conservation laws and central charges,'' \np{B633}, 3\Ndash82 (2002).

\bibitem{Bar00} N. Barros e Sa, ``Hamiltonian analysis of general relativity with the Immirzi parameter,'' \ijmp{D10}, 261\Ndash272 (2001).

%\bibitem{BF83} I. Batalin and E. Fradkin, ``A generalized canonical formalism and quantization of reducible gauge theories,'' \pl{B122}, 157\Ndash164 (1983).

\bibitem{BV77} I. Batalin and G. Vilkovisky, ``Relativistic S-matrix of dynamical systems with boson and fermion constraints,'' \pl{B69}, 309\Ndash312 (1977).

\bibitem{BV81} I. Batalin and G. Vilkovisky, ``Gauge algebra and quantization,'' \pl{B102}, 27\Ndash31 (1981).

\bibitem{ency} M. Bojowald and R. J. Szabo (eds.), \emph{Encyclopedia of Mathematical Physics}.

\bibitem{BHL09} T. J. Bridges, P.E. Hydon, and J. K. Lawson,
``Multisymplectic structures and the variational bicomplex,"
Mathematical Proceedings of the Cambridge Philosophical Society, Volume 148, %Issue 1 , January 2010 , pp. 
159\Ndash178 (2010).

\bibitem{CC22} G. Canepa and A. S. Cattaneo, ``Corner structure of four-dimensional general relativity in the coframe formalism,'' % 43 pages, 
\href{https://arxiv.org/abs/2202.08684}{	arXiv:2202.08684}

\bibitem{CCS20} G. Canepa, A.~S.~Cattaneo, and M. Schiavina, ``Boundary structure of general relativity in tetrad variables,''
%33 pages, \href{https://arxiv.org/abs/2001.11004}{arXiv:2001.11004}; to appear in 
\href{https://dx.doi.org/10.4310/ATMP.2021.v25.n2.a3}{\atmp{25}, 327\Ndash377 (2021)}.

\bibitem{CCT20} G.~Canepa, A.~S.~Cattaneo, and M.~Tecchiolli,
``Gravitational constraints on a lightlike boundary,''
%42 pages, \href{https://arxiv.org/abs/2010.14871}{arXiv:2010.14871}; to appear in 
\href{https://link.springer.com/article/10.1007%2Fs00023-021-01038-z}
{\ahp{22}, 3149\Ndash3198 (2021)}.

\bibitem{CC21}
A. S. Cattaneo and I. Contreras Palacios,
``Split canonical relations,''
\ahp{4}, 155\Ndash185 (2021) 



\bibitem{CMR11} A. S. Cattaneo, P. Mnev, and N. Reshetikhin
``Classical and quantum Lagrangian field theories with boundary,''
\href{https://pos.sissa.it/155/044/pdf}{PoS(CORFU2011)044}


\bibitem{CMR14} A. S. Cattaneo, P. Mnev, and N. Reshetikhin, ``Classical BV theories on manifolds with boundaries,'' \cmp{332},
 %Commun. Math. Phys. 332, 
 535\Ndash603 (2014).

\bibitem{CS17} A.~S.~Cattaneo and M. Schiavina, ``The reduced phase space of Palatini--Cartan--Holst theory,''
%32 pages,
%\href{https://arxiv.org/abs/1707.05351}{arXiv:1707.05351};
\ahp{20}, 445\Ndash480 (2019).


\bibitem{Dirac50} P. A. M. Dirac, ``Generalized Hamiltonian dynamics,'' Canad. J. Math. \textbf{2}, 129\Ndash148 (1950).

\bibitem{FK77} E. S. Fradkin and G. A. Vilkovisky, ``Quantization of relativistic systems with constraints: Equivalence of canonical and covariant formalisms in quantum theory of gravitational field,''
CERN Preprint CERN-TH-2332 (1977).

\bibitem{FV75} E. Fradkin and G. Vilkovisky, ``Quantization of relativistic systems with constraints,'' \pl{B55}, 224\Ndash226 (1975).

\bibitem{FOPS21} L. Freidel, R. Oliveri, D. Pranzetti, and S. Speziale, ``Extended corner symmetry, charge bracket and Einstein's equations,'' J. High Energ. Phys. \textbf{83}, 083 (2021). 

\bibitem{HHKN76} F. W. Hehl, P. Von Der Heyde, GD. Kerlick, and J. M. Nester, ``General relativity with spin and torsion: Foundations and prospects,'' Rev.Mod.Phys. \textbf{48}, 393\Ndash416 (1976).

\bibitem{KT} J. Kijowski and W. M. Tulczyjew, \emph{A Symplectic Framework for Field Theories}, Lecture notes in Physics \textbf{107}, Springer Verlag (Berlin, Heidelberg, 1979).

\bibitem{MBV21} J. Margalef-Bentabol and E. J. S. Villase\~nor, ``Geometric formulation of the covariant phase space methods with boundaries,'' \phr{D103}, 025011 (2021).

\bibitem{MBV22} J. Margalef-Bentabol and E. J. S. Villase\~nor, ``Proof of the equivalence of the symplectic forms derived from the canonical and the covariant phase space formalisms,'' \phr{D105}, L101701 (2022).

\bibitem{nLabps} nLab authors, ``Phase space,'' \href{https://ncatlab.org/nlab/show/phase+space#:~:text=The%20covariant%20phase%20space%20of,choice%20of%20a%20Cauchy%20surface.}{nLab}

\bibitem{OS19} R. Oliveri and S. Speziale, ``Boundary effects in general relativity with tetrad variables,'' Gen. Rel. Grav. \textbf{52}, 83 (2020).

\bibitem{RS22} A. Riello and M. Schiavina, ``Hamiltonian gauge theory with corners: Constraint reduction and flux superselection,'' \href{https://arxiv.org/abs/2207.00568}{arXiv:2207.00568}

\bibitem{Tecch20} M. Tecchiolli, ``On the mathematics of coframe formalism and Einstein--Cartan theory---A brief review,''
Universe \textbf{2019}, 5, 206. \href{https://doi.org/10.3390/universe5100206}{https://doi.org/10.3390/universe5100206}
%\href{https://arxiv.org/abs/2008.08314}{arXiv:2008.08314}

\bibitem{WZ99} R. M. Wald and A. Zoupas, ``A general definition of `conserved quantities' in general relativity and other theories of gravity,'' \phr{D61}, 084027 (2000). %; reprinted in \href{https://arxiv.org/abs/gr-qc/9911095}{}

\bibitem{Wei71} A. Weinstein, ``Symplectic manifolds and their lagrangian submanifolds,'' \adm{6}
% Advances in Math. 6:
329\Ndash346 (1971).

\bibitem{Wei10} A. Weinstein, ``Symplectic categories,'' Proceedings of the Summer School, IST, Lisbon, July 13\Ndash17, 2009, Port. Math. \textbf{67} %no. 2, 
119 (2010), \href{https://arxiv.org/abs/0911.4133}{	arXiv:0911.4133}

\end{thebibliography}
\end{document}